\DeclareMathOperator{\Tr}{Tr}
\theoremstyle{plain}
\newtheorem{theorem}{Theorem}
\newtheorem{lemma}[theorem]{Lemma}
\newtheorem{corollary}[theorem]{Corollary}
\newtheorem{definition}{Definition}
\newtheorem{remark}{Remark}
\newtheorem*{remark*}{Remark}
\newtheorem*{corollary*}{Corollary}
\newtheorem*{theorem*}{Theorem}
\newtheorem{proposition}{Proposition}
\begin{document}
\title{Emergence of noncontextuality under quantum Darwinism}
\author{Roberto D. Baldij\~{a}o}
\email[Corresponding author: ]{rdbaldi@ifi.unicamp.br}
\affiliation{Institute of Physics `Gleb Wataghin', University of Campinas--UNICAMP,
13083-859, Campinas, SP, Brazil}
\author{Rafael Wagner}
\affiliation{International Iberian Nanotechnology Laboratory (INL) Av. Mestre José Veiga, 4715-330 Braga, Portugal}
\affiliation{Department of Mathematical Physics, Institute of Physics, University of S\~ao Paulo, R. do Mat\~ao 1371, S\~ao Paulo 05508-090, SP, Brazil}
\affiliation{Centro de F\'{i}sica, Universidade do Minho, Braga 4710-057, Portugal}
\author{Cristhiano Duarte}
\affiliation{Wigner Research Centre for Physics, H-1121, Budapest Hungary}
\affiliation{School of Physics and Astronomy, University of Leeds, Leeds LS2 9JT, United Kingdom}
\affiliation{International Institute of Physics, Federal University of Rio Grande do Norte, 59070-405 Natal, Brazil}
\author{B\'{a}rbara Amaral}
\affiliation{Department of Mathematical Physics, Institute of Physics, University of S\~ao Paulo, R. do Mat\~ao 1371, S\~ao Paulo 05508-090, SP, Brazil}
\author{Marcelo Terra Cunha}
\affiliation{Universidade Estadual de Campinas, Cidade Universit\'aria Zeferino Vaz, 651, S\'ergio Buarque de Holanda, Campinas, SP,13083059, Brazil}

\date{\today}

\begin{abstract}
    Quantum Darwinism proposes that the proliferation of redundant information plays a major role in the emergence of objectivity out of the quantum world. 
    Is this kind of objectivity necessarily classical?  We show that if one takes Spekkens' notion of noncontextuality as the notion of classicality and the approach of Brand\~{a}o, Piani and Horodecki to quantum Darwinism, the answer to the above question is `yes', if the environment encodes sufficiently well the proliferated information. Moreover, we propose a threshold on this encoding, above which one can unambiguously say that classical objectivity has emerged under quantum Darwinism.
\end{abstract}
\maketitle

\section{Introduction}
Quantum theory is a normative set of rules whose explanatory power finds no counterpart in classical theories~\cite{Bell66,Bell64,JL16}. As a framework adapted to deal with probabilistic descriptions of microscopical phenomena, quantum theory generalises, and includes, the whole set of classical probability theories~\cite{GohEtAl18}. Remarkably, quantum theory's generalisation and explanatory capacity come together with a long list of extremely useful practical applications~\cite{BCPSW14} and philosophic implications which challenge our classical understanding of physical reality~\cite{PBR12,Leifer14}. 

Amongst the many challenges that quantum theory presents, two of them still merit attention and a proper connection. First, an unambiguous definition of what is classical and what is quantum. Second, the emergence out of a deeper microscopical quantum world of the objective reality we are familiar with. Arguably, an important candidate for the former is \emph{noncontextuality}, as it expresses an important ingredient of the reasoning underlying classical probability theories  -- operational equivalences reflect and imply ontological equivalences~\cite{spekkens2019_Leibniz,Spekkens_2008_C&Negativity,Spekkens_2005_C,schmid2020_StructureTheorem}. 

To investigate the emergence of an objective {(commonly agreed upon) reality,} we  consider that the environment plays a crucial role in mediating  interactions between observers and the observed system. Within this paradigm, \emph{quantum Darwinism} is undoubtedly one of the most physically appealing {notions, seeking to explain why distinct and independent observers usually}  obtain the same information regarding the system they are interacting with~\cite{Zurek_2003_Review_QD0, Zurek_1981_PointerStates,zurek_2007_RelativeQDReview}. In other words, quantum Darwinism helps to explain emergence of objectivity. Here we consider the approach to quantum Darwinism due to Brand\~{a}o, Piani and Horodecki ~\cite{BPH_2015_QD}, which is independent of specific aspects of the system-environment interaction, thus being capable to show generic emergence of objectivity. {Moreover, it encapsulates some essential features of other important processes related to emergence of objectivity under the quantum realm~\cite{Zurek_2003_Review_QD0,Horodecki_SSB_2015,Le_StrongQD_2019}}.

{ 
Can we say that the objectivity obtained in Darwinist processes necessarily implies classicality?} { Considering solely  Brand\~{a}o, Piani and Horodecki}{'s original argument}, this is certainly not clear: as we shall see, no restriction is made on the information upon which {the} {observers agree. Moreover, a recent result reinforce the idea that agreement --classical or not-- may be an important defining aspect of \emph{quantum} theory itself, making this question even more subtle~\cite{ContrerasTejada2021_Agreement}}.
In this work we show that whenever {{`enough} objectivity {'} emerges} due to a Darwinist process, observers can construct noncontextual models explaining the statistics of their experiments. {Therefore,} the novel aspect we {put forward is that noncontextuality emerges out of {{(sufficiently successful)}} quantum Darwinism or, in other words, that {a precise notion of} classicality emerges from { this} notion of objectivity.} { Moreover,} we discuss that even if quantum Darwinism does not hold, contextuality might still fade out in an \textit{environment as a witness} dynamics, thus being a more sensitive feature. { Finally, we propose a quantitative condition for emergence of \emph{classical objectivity} under quantum Darwinism}. Surprisingly, our proofs can also be of interest to quantum state discrimination, as we show that a high success probability on this task can point to a geometric structure of the states being discriminated - they must be vertices of a simplex in the space of density matrices.

\section{Background}

\subsection{Environment as a witness and Quantum Darwinism}

{{
Quantum Darwinism aims to explain how independent observers may obtain the same information about a central quantum system, despite the odd features of quantum theory that defy a classical notion of objectivity. For instance, if quantum theory may be applied to arbitrary systems and a cat's probability of staying alive or dying are both non-null, why do we all see the same outcome (hopefully alive)? Ultimately, providing an answer to this question explains the emergence of an objective world out of the quantum realm -- where the notion of objectivity comes from the common experience of potential independent observers\footnote{{{Given the dependence on the observers' experience, this notion has also been called `inter-subjectivity' rather than objectivity \cite{Mironowicz2017_InterSubj}. Nevertheless, we will use the term `objectivity', as often used in the quantum Darwinism literature.}}}. In this section we concisely describe the results of \citet*{BPH_2015_QD} that will be of importance to this work.}
}

{{Initially proposed in ref.~\cite{Zurek_2003_Review_QD0}, quantum Darwinism  recognizes the active role of a fragmented environment in the emergence of objectivity, as dictated by the so-called `Environment as a Witness' paradigm. }} Broadly speaking, potential observers extract information about the system of interest \emph{ by interacting with a {portion} of its environment} -- rather than interacting with the system itself~\cite{Ollivier_2005_EW}. 

Let us look into the Environment as a Witness paradigm in more detail (see fig \ref{fig:Paradigms}). The central system of interest, $A$, interacts with its environment, $\mathcal{E}$,  {composed} of {$N$} subsystems, $B_1,\ldots,B_{N}$.
We focus on the information that the environment has about $A$ represented by a {completely positive and trace-preserving (cptp)} map $\Phi: \mathcal{D}(\mathcal{H}_A)\rightarrow \mathcal{D}(\mathcal{H_E})$, where $\mathcal{D}(\mathcal{H}_A)$ represents the density matrices acting on the central system's Hilbert space $\mathcal{H}_A$, and analogously for $\mathcal{D}(\mathcal{H_E})$. Now, the actual dynamics of interest, which we {name} EW$_t$-dynamics { (short for Environment as a Witness $t$-dynamics)}, considers only a portion of such environment, and is defined as follows. 

\begin{definition}[${\rm EW}_t$-dynamics]
\label{def:EWdynamics}
Consider a {dynamics} 
$\Phi: \mathcal{D}(\mathcal{H}_A)\rightarrow \mathcal{D}(\mathcal{H_E})$, from a central system to a multi-partite environment, $A$ and $\mathcal{E}$ (resp.).
{Let ${S_{t} \subset \{1,...,N\}}$ be a set of labels describing $t$ portions of the environment $\mathcal{E}$. An {\textbf{EW$_t$-dynamics}} for the subset $B_{S_t}:=\{B_j\}_{j\in S_t}$ is the cptp map $\Phi^{B_{S_t}}:= \mathrm{Tr}_{\mathcal{E}\setminus B_{S_t}}\circ\Phi$}.
\end{definition}

{Tracing out $\mathcal{E}{\setminus}B_{S_t}$, an EW$_t$ dynamics selects exactly which fractions of the environment one wants to centre attention at.} \citet*{BPH_2015_QD} show that this dynamics {already implies} a certain kind of objectivity, called objectivity of observables,  in which a    `pointer POVM'~\cite{Beny_2008_unsharp} is selected by the interaction.
\begin{theorem}[Adapted Thm. $2$ of \cite{BPH_2015_QD}, Objectivity of observables]
\label{theo: BPH1}
Consider an EW$_t$-dynamics $\Phi^{B_{S_t}}$. If the total environment is big enough compared to $S_t$ ($N\gg t$), then there exists a POVM $\{\tilde{E}_k\}$ acting on $\mathcal{D}(\mathcal{H}_{A})$ such that, for most choices of $S_t$,
\begin{equation}
    \Phi^{B_{S_t}}(\rho^A) \approx \sum_k {\rm Tr}[\tilde{E}_k \rho^A]\sigma^{B_{St}}_k,
    \label{eq: ObjectivityObs}
\end{equation}
where $\sigma^{B_{S_t}}_k\in\mathcal{D}(\bigotimes_{j\in S_t}\mathcal{H}_{B_j})$ { and $\{\tilde{E}_k\}_k$ is independent of $B_{S_t}$}.
\end{theorem}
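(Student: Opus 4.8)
The plan is to read off the right-hand side of \eqref{eq: ObjectivityObs} as the Holevo form of a measure-and-prepare (entanglement-breaking) channel, and to show that the EW$_t$-dynamics takes this form, up to small error, for most fragments $B_{S_t}$. First I would purify the input: introduce a reference system $R$ so that $\rho^A$ is the reduced state of a pure $\ket{\psi}_{RA}$, and set $\omega_{R\mathcal{E}} = (\mathrm{id}_R\otimes\Phi)(\psi_{RA})$, where $\psi_{RA}$ denotes the corresponding density operator. Since a channel is entanglement-breaking precisely when its output state with the reference is separable across the $R$ versus output cut, the target \eqref{eq: ObjectivityObs} is equivalent to showing that each reduced state $\omega_{RB_{S_t}}$ is approximately classical--quantum, i.e.\ $\omega_{RB_{S_t}}\approx\sum_k p_k\,\rho_k^R\otimes\sigma_k^{B_{S_t}}$, with the classical index $k$ and the states $\rho_k^R$ (equivalently, the POVM $\{\tilde E_k\}$ read off on the $A$ side) not depending on the chosen fragment.

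The heart of the argument is that quantum correlations cannot be broadcast: only classical information can be redundantly shared among many fragments. I would make this quantitative by bounding the total correlation $I(R:\mathcal{E})\le 2\log d_A$ and introducing a measure of the non-classicality of each $\omega_{RB_j}$ (for instance its quantum discord, or its trace distance to the nearest classical--quantum state relative to a fixed pointer observable on $R$). A monogamy/subadditivity estimate of the form $\sum_{j=1}^N (\text{non-classicality of }\omega_{RB_j})\le c\,\log d_A$ then caps how many fragments can carry genuinely quantum correlations with the reference. A pigeonhole argument over the $N$ subsystems forces all but a small fraction to have non-classicality of order $t\log d_A/N$, which is exactly where the hypothesis $N\gg t$ enters and produces the ``for most choices of $S_t$'' clause; the chain rule for mutual information then lets one bundle $t$ such good subsystems into $B_{S_t}$ while keeping the total non-classicality small.

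The remaining, more delicate step is converting this smallness of a correlation measure into the explicit algebraic form of \eqref{eq: ObjectivityObs}, with an error controlled in trace distance. For this I would invoke a robustness/continuity result — either a Fannes--Audenaert-type bound on the relevant entropic quantity, or an approximate-recovery statement guaranteeing that small (conditional) mutual information implies closeness to a recoverable, hence classical--quantum, state. This yields, for each good fragment, an approximation $\Phi^{B_{S_t}}(\rho^A)\approx\sum_k \Tr[E_k^{(S_t)}\rho^A]\,\sigma_k^{B_{S_t}}$, with an a priori fragment-dependent POVM $\{E_k^{(S_t)}\}$.

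The main obstacle — and the step that encodes the very objectivity one wants — is upgrading these fragment-dependent POVMs to a single, fragment-independent pointer POVM $\{\tilde E_k\}$. Here I would exploit redundancy: since the same classical information about $A$ is imprinted on every good fragment, the decoding measurements must agree, and I would make this precise by extracting a common pointer observable (say from the spectral structure shared by the $\omega_{RB_j}$, or by fixing the POVM on one reference fragment and bounding the others to it via the triangle inequality). Controlling the accumulated error through all these approximations — ensuring that the ``most fragments'' selection, the continuity bounds, and the alignment of POVMs compose into one trace-distance estimate that stays small under $N\gg t$ — is the technical crux of the proof.
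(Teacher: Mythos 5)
You should first know that the paper does not actually prove this statement: it is an adapted citation of Theorem~2 of \citet{BPH_2015_QD}, and the paper's Appendix~\ref{App: BPH} only restates it rigorously (with the explicit error bound $\left(27\ln(2)\,d_A^6\log(d_A)\,t/(N\delta^3)\right)^{1/3}$ in diamond norm) and argues the $N\to\infty$ simplification used in the main text. So your attempt has to be measured against the original proof in \cite{BPH_2015_QD}. At the level of scaffolding, your reconstruction is faithful to it: that proof does purify through a reference $R$, does use $I(R:\mathcal{E})\leq 2\log d_A$ together with the chain rule and an averaging/pigeonhole step over subsets (which is precisely where $N\gg t$ and the ``for most choices of $S_t$'' clause originate), and does reduce the claim to approximate separability of the $R$-versus-fragment state, i.e.\ to the reduced channel being approximately entanglement-breaking.

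Two of your intermediate steps, however, contain genuine gaps. First, your proposed monogamy bound $\sum_{j}(\text{non-classicality of }\omega_{RB_j})\leq c\log d_A$, with non-classicality taken as discord or trace distance to the classical--quantum set, is not a known inequality and is not what the chain rule yields: what telescopes is the \emph{conditional} mutual information, $\sum_j I(R:B_j\mid B_1\cdots B_{j-1})=I(R:\mathcal{E})$, and the unconditioned non-classicality of individual fragments is not additively capped in this way. Second, and more seriously, small conditional mutual information plus a Fawzi--Renner-type recovery statement does not deliver the separable (let alone classical--quantum) form you need: the recovery map is applied to $\omega_{RE'}$, which may be entangled across $R{:}E'$, so the recovered state carries no separability guarantee across $R{:}B_{S_t}$. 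Only \emph{exact} Markov states have separable $RB_{S_t}$ marginals (via the Hayden--Jozsa--Petz--Winter structure theorem), and that structure is not robust in trace distance. The actual proof in \cite{BPH_2015_QD} circumvents this by invoking the faithfulness of squashed entanglement with respect to the one-way LOCC norm (Brand\~{a}o--Christandl--Yard): small conditional mutual information bounds the LOCC-norm distance of the $R$--fragment state to the separable set, and converting that into a diamond-norm statement about the channel is what costs the dimension factors visible in the $d_A^6$ above. Finally, the fragment-independence of $\{\tilde{E}_k\}$, which you rightly identify as the crux, is not obtained by a posteriori alignment --- your triangle-inequality idea founders on the non-uniqueness of separable decompositions, since closeness of two measure-and-prepare channels does not force closeness of their defining POVMs. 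In the original argument the independence is structural: the POVM is induced by a measurement on a fixed conditioning fragment, pulled back through the channel to $A$, and therefore never depends on the randomly chosen $S_t$ in the first place.
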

{A consequence of this theorem is that, when its conditions are met, the environmental state $\Phi^{B_{S_t}}(\rho^A)$ approximately} encode{s} {information regarding the probability distribution $\tilde{p}_k:={\rm Tr}[\tilde{E}_k \rho^{A}]$. (Note, however, that nothing is said on how good this encoding is.}) The most important feature of this result is that the POVM $\{\tilde{E_k}\}_k$ \emph{does not depend on the elements of $B_{S_t}$}. {The emergence of an objective reality comes from this fact, i.e. that} the observable (possibly) monitored by the environment is essentially independent of which portion of $t$ subsystems of the environment one ends up with.

From now on, we only consider $S_t$ to be portions of the environment such that the approximation in {Eq.}\eqref{eq: ObjectivityObs} is valid. Moreover, we assume {an infinite-sized environment, as is usual in open quantum systems approaches, such that the approximation can be substituted by an equality (see sec.~\ref{App: BPHandRelax} of the Appendix for details and for a relaxation of this assumption). }
Without loss of generality, we also assume that {each} $\tilde{E_k}\neq 0$.

With {these} assumptions, we see that any EW$_{t}$-dynamics, $\Phi^{B_{S_t}}$, is effectively a measure-and-prepare map, defined by the pairs $(\tilde{E}_k,\sigma^{B_{S_t}}_k)_k$, where $\{\tilde{E}_k\}_k$ form{s} a POVM {acting} on the central system with non-null elements{,} and $\{\sigma^{B_{S_t}}_k\}_k$ are states of $B_{S_t}$ { (as such, these states may depend on which portion of the environment we are focusing on)}.

\begin{figure}[ht]
    \centering
    \includegraphics[width=\columnwidth]{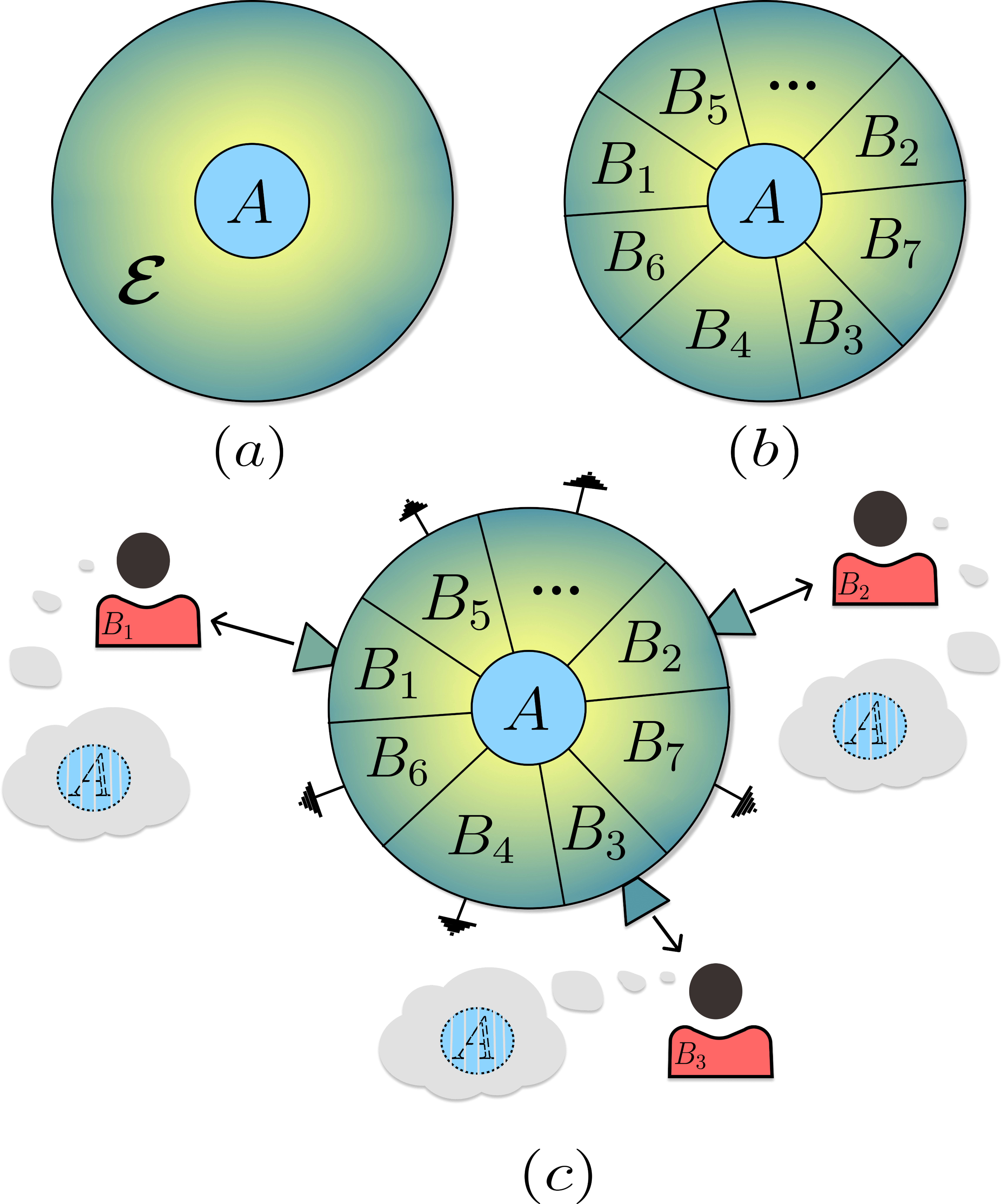}
    \caption{{{\textbf{Different approaches to open systems.} $(a)$ \textbf{Decoherence paradigm}: The focus is on the central system $A$. The environment is treated as a monolithic system, usually to be traced out after the interaction. $(b)$ \textbf{Environment split into fractions}:  The environment is more accurately described as  a composition of several subsystems.
    $(c)$ \textbf{Environment as a witness dynamics and quantum Darwinism process}: The environment under this paradigm is promoted to an information channel, spreading redundant information about system $A$ to independent observers. { In the example of the figure, the number of portions in $S_t$ is $t=3$  and $B_{S_t}=\{B_1,B_2,B_3\}$.}  Theorem \ref{theorem: NCunderQD} {tells us a sufficient condition} for such redundant information to be noncontextual, i.e., classical.}}}
    \label{fig:Paradigms}
\end{figure}
The objectivity of observables obtained through EW$_t$-dynamics tells that different observers might gather information regarding the same observable, $\{\tilde{E}_k\}_k$, (almost) independently of the choice of $B_{S_t}\subset\mathcal{E}$. However, objectivity of observables does not solve adequately the emergence of an objective reality, as a crucial feature of {the chosen notion of} objectivity is that independent observers should agree on which \emph{outcome} they see{. In other words, they should all be able to infer the same label $k$ upon measuring their share of the environment} -- and this is not a consequence of objectivity of observables. \citet*{BPH_2015_QD} tell us that this happens when the full Darwinist process occurs, i.e., when information about $\{\tilde{E}_k\}_k$ is indeed well-encoded on the environmental subsystems. 

{Let us make the above claim more precise.} Consider a setting with several observers (call them Bob). For simplicity, assume we have $t$ Bobs, each receiving one subsystem of $B_{S_t}$, and let us label each Bob by the element $B_j\in B_{S_t}$. Each Bob will be the end of a reduced dynamics, $\Phi^{B_j}:={\rm Tr}_{B_{S_t}\setminus B_j}\circ\Phi^{B_{S_t}}$, so Bob $B_j$ will hold a state of the form $\sum_k\tilde{p}_k \sigma^{B_j}_k$, where $\sigma^{B_j}_k=\text{Tr}_{B_{S_t}\setminus B_j}[\sigma^{B_{S_t}}_k]$.
Now, the hypotheses that different subsystems encode well enough the information regarding the selected observable, as presumed by quantum Darwinism \cite{Zwolak_2017_IrrelevanceofIrrelevant},
brings an important requirement in the landscape of reference \cite{BPH_2015_QD}: 
sufficient distinguishability of the states $(\sigma^{B_j}_k)_k$ for each $B_j$. We thus define
\begin{definition}[QD$_\eta$ process]
Consider an EW$_{t}$-dynamics $\Phi^{B_{S_t}}:=(\tilde{E}_k,\sigma^{B_{S_t}}_k)_k$ and the $t$ partial traces $\Phi^{B_j}$. 
Define the {quantity}
\begin{align}
    p_{\rm guess}[(\tilde{p}_k,\sigma^{B_j}_k)_k]:=\max_{\{F_k^{B_j}\}}\sum_k \tilde{p}_k{\rm Tr}\left[  F^{B_j}_k\sigma^{B_j}_k\right], 
\end{align}
where $\tilde{p}_k={\rm Tr}[\tilde{E}_k\rho^A]$, $\sigma^{B_j}_k = {\rm Tr}_{B_{S_t}\setminus B_j}[\sigma^{B_{S_t}}_k]$ and $\{F^{B_j}_k\}_k$ are local POVMs for each Bob. {A} {\textbf{Darwinism process with distinguishability}} $\eta$, QD$_\eta$, is said to occur {when} { there exists an $0 \leq \eta \leq 1$ such that}, for all $B_j\in B_{S_t}$,  
\begin{align}
    \min_{\rho^A} p_{\rm guess}[(\tilde{p}_k,\sigma^{B_j}_k)]\geq\eta.
    \label{eq:QDeta}
\end{align}
\end{definition}

The quantity $p_{\rm guess}$ tells us how well one can discriminate between states $\{\sigma_k^{B_j}\}_k$, depending on $\rho^A$ {(recall, $\tilde{p}_{k}$ is a function of $\rho_{A}$}). Thus, QD$_\eta$ sets a lower bound on the distinguishability of the $\{\sigma_k^{B_j}\}_k$, \emph{to all $B_j$} { and regardless of the initial state of the central system}. From now on, when we write QD$_\eta$ we assume that $\eta$ is the maximal value for which Inequality \eqref{eq:QDeta} holds.

Finally,  \citet*{BPH_2015_QD} tell us that a Darwinist process essentially leads to { all Bobs seeing the same outcome}:
\begin{proposition}[Adapted { Prop. $3$} of \cite{BPH_2015_QD} {, Objectivity of Outcomes}]
\label{remark: QDObjectivityofOutcomes}
{ Consider the probability that each Bob, by performing a local measurement $\{F^{B_j}_k\}_k$, get the same outcome $k$, provided that the encoded label was indeed $k$,i.e.:
\begin{align}
    \min_{\rho^A}\sum_k\tilde{p}_k{\rm Tr}\left[\bigotimes_{j\in{S_t}}F_k^{B_j}\sigma^{S_t}_k\right].
\end{align}
}
{{ For any QD$_\eta$ process,}} there exists local POVMs {$\{\bar{F}_k^{B_j}\}_k$} {for each Bob} such that
\begin{align}
    \min_{\rho^A} \sum_k \tilde{p}_k {\rm Tr}\left[\bigotimes_{{j\in S_t}} {\bar{F}}_k^{B_j}\sigma^{S_t}{_k}\right]\geq 1- 6t\delta^{\frac{1}{4}},
\end{align}
where $\delta=1-\eta$.
\end{proposition}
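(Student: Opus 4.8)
The plan is to reduce the joint-agreement statement to $t$ single-system discrimination problems, one per Bob, and then recombine them by an operator union bound. First I would record the marginal structure: since $\Phi^{B_{S_t}}$ is the measure-and-prepare map $(\tilde E_k,\sigma_k^{B_{S_t}})_k$, tracing out $B_{S_t}\setminus B_j$ leaves Bob $B_j$ with the ensemble $\{\tilde p_k,\sigma_k^{B_j}\}$, where $\sigma_k^{B_j}=\mathrm{Tr}_{B_{S_t}\setminus B_j}[\sigma_k^{S_t}]$, and these states are fixed (only the weights $\tilde p_k=\mathrm{Tr}[\tilde E_k\rho^A]$ depend on the input). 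Thus the QD$_\eta$ hypothesis says precisely that each of these $t$ discrimination problems has worst-case (over $\rho^A$) guessing probability at least $\eta=1-\delta$.

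The delicate point is prior-independence: $p_{\mathrm{guess}}$ optimizes the POVM separately for each fixed $\tilde p$, whereas the Proposition demands a single family $\{\bar F_k^{B_j}\}_k$ per Bob that performs well against the worst-case $\rho^A$ simultaneously. I would resolve this by a minimax argument. The set of attainable weight vectors $\{(\mathrm{Tr}[\tilde E_k\rho^A])_k:\rho^A\in\mathcal D(\mathcal H_A)\}$ is convex and compact (an affine image of the density matrices), the set of POVMs is convex and compact, and the payoff $\sum_k\tilde p_k\mathrm{Tr}[F_k^{B_j}\sigma_k^{B_j}]$ is bilinear; Sion's minimax theorem then lets me swap $\min_{\rho^A}\max_{F}$ for $\max_F\min_{\rho^A}$, producing a fixed POVM $\{\bar F_k^{B_j}\}_k$ with $\min_{\rho^A}\sum_k\tilde p_k\mathrm{Tr}[\bar F_k^{B_j}\sigma_k^{B_j}]\ge\eta$, i.e. worst-case error at most $\delta$ for every input.

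Next I would combine the Bobs. Writing $P_j$ for $\bar F_k^{B_j}$ embedded into $B_{S_t}$, the operators $\{P_j\}_{j\in S_t}$ commute (they act on different tensor factors) and satisfy $0\le P_j\le\id$, so the scalar union bound $1-\prod_j p_j\le\sum_j(1-p_j)$ lifts to the operator inequality $\id-\bigotimes_{j}\bar F_k^{B_j}\le\sum_{j}(\id-P_j)$. Pairing this with $\sigma_k^{S_t}$, reducing each term to its $B_j$-marginal, summing against $\tilde p_k$, and minimizing over $\rho^A$ bounds the disagreement probability by $\sum_{j}\delta=t\delta$, hence a joint-agreement probability at least $1-t\delta$, which already implies the claim since $t\delta\le 6t\delta^{1/4}$ on $\delta\in[0,1]$.

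The main obstacle is the prior-independence step: verifying that Sion's theorem genuinely applies (convex-compactness of the attainable-weights set together with continuity and bilinearity of the payoff) and that the resulting POVM is simultaneously near-optimal for every $\rho^A$. If one prefers to avoid minimax and argue more operationally -- extracting from $p_{\mathrm{guess}}\ge1-\delta$ only that the $\sigma_k^{B_j}$ are nearly orthogonal, building a single prior-free measurement from that near-orthogonality, and invoking a gentle-measurement lemma to control the disturbance of $\sigma_k^{S_t}$ as the $t$ local measurements are applied in sequence -- then the trace-distance/fidelity conversions (Fuchs--van de Graaf) each cost a square root, and chaining them over the $t$ subsystems degrades the exponent to produce precisely the stated $1-6t\delta^{1/4}$. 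I expect this second, more hands-on route is the one behind the quoted constant.
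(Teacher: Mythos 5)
Your argument is correct, and it actually yields the stronger bound $1-t\delta$, from which the stated $1-6t\delta^{1/4}$ follows trivially since $\delta\le\delta^{1/4}$ on $[0,1]$. Note first that the paper does not prove this proposition at all: it is imported by citation from Brand\~{a}o, Piani and Horodecki (only their Theorem 2 is restated in the appendix), so the comparison is with their original argument, which --- as you guess in your closing paragraph --- extracts near-orthogonality of the $\sigma_k^{B_j}$ from high $p_{\rm guess}$ and pays square roots in Fuchs--van de Graaf/gentle-measurement-type conversions; that is where the exponent $1/4$ and the constant $6$ originate. Your route is genuinely different and more elementary under the paper's standing idealization of an exactly measure-and-prepare ${\rm EW}_t$-dynamics: the payoff $\sum_k {\rm Tr}[\tilde{E}_k\rho]\,{\rm Tr}[F_k\sigma_k^{B_j}]$ is bilinear on the product of two compact convex sets in finite dimension (density operators and $k_{\rm max}$-outcome POVMs), so the Sion/von Neumann swap is legitimate and hands each Bob a single, prior-independent POVM with worst-case success at least $\eta$; and the operator union bound $\mathbb{1}-\prod_j P_j\le\sum_j(\mathbb{1}-P_j)$ for the commuting positive contractions $P_j$ holds by simultaneous diagonalization, while pairing it with $\sigma_k^{S_t}$ needs only the marginals $\sigma_k^{B_j}$ --- so, correctly, you assume no product structure on $\sigma_k^{S_t}$. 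What each approach buys: yours shows that, in the exact measure-and-prepare regime assumed throughout the main text, the quarter power and the constant $6$ are artifacts of the original construction rather than intrinsic, at the price of a purely existential (non-constructive) choice of measurement; the original construction is explicit, and its extra slack is also the kind of robustness one wants when the channel is only diamond-norm close to measure-and-prepare, the finite-environment regime the paper relaxes to in its first appendix. Your fallback ``hands-on'' sketch is therefore unnecessary for the statement as given here --- the minimax argument goes through as written.
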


{{In other words, the better the distribution $Tr[\tilde{E}_k\rho^A]$ is encoded on the states $\{\sigma_k^{B_j}\}$, the higher the probability that all Bobs will see the same outcome (provided they implement the correct measurements). Note that, whether or not the Bobs will see the same outcome is not the key aspect here; rather, what is important is that if they try to obtain the information correctly, they will agree. Roughly speaking, this means that the higher the $\eta$ in a QD$_\eta$ process, the higher the `amount' of emergent objectivity. We thus have a continuous emergence of objectivity under this approach to quantum Darwinism. In the (often considered) limit $\eta\rightarrow1$ of perfect encoding, nearly perfect objectivity is obtained.}}
{Whether this condition of `high $\eta$' is generic or not among physically motivated interactions is still an open question. Even though it is a natural assumption to take in the Environment as a Witness paradigm, knowing how generic or which interactions will have this feature is vital for the Darwinism program, and we hope that future work will provide such an answer. 

The emergence of objectivity under quantum Darwinism in the framework of { Brand\~{a}o, Piani and Horodecki} is a {remarkable} result. 
}
It comes, nevertheless, with a small drawback{, as} there are no restrictions on {the nature of} $\{\tilde{E}_k\}_k$~\cite{BPH_2015_QD}{-- in principle, $\{\tilde{E}_k\}_k$ can even represent an informationally complete POVM}.
This is incompatible with the idea of a `classical observable'  selected by the interaction,  {as usually thought of within} the Darwinist program \cite{Zwolak_2017_IrrelevanceofIrrelevant}. {The possibility of having {informationally complete} POVMs spreading over the} environment means that essentially \emph{all information} ({including} \emph{quantum} information) contained in $\rho^A$ can be stored in the environment.

Can we still say that { some sort of classicality}  arise in those cases?
Here we {argue} that, even  if we allow for $\{\tilde{E}_k\}_k$ to be an {{informationally complete POVM}}, there is still {a precise} notion of emergence of classicality {in} { some} QD$_\eta$ processes: all the Bobs can construct a noncontextual ontological model for the statistics of their measurements on every $\Phi^{B_j}(\rho^A)$.

\subsection{Spekkens Contextuality}

The conception of classicality we adopt is {the} Spekkens' notion of noncontextuality~\cite{Spekkens_2005_C} {for} prepare-and-measure scenarios. {This notion {captures} essential features of classical theories, and thus encapsulates several criteria associated to classicality in different research areas. For instance, {Spekkens's} noncontextuality is equivalent to the existence of positive quasiprobability distributions~\cite{Spekkens_2008_C&Negativity}, which is a widely used classicality criteria in quantum optics~\cite{mandel_wolf_1995,Wigner_Quasiprobability}. In the framework of generalized probabilistic theories~\cite{Barrett_GPT,Janotta_GPT}, in which physical principles can be studied without resorting to the quantum or classical formalisms, a noncontextual theory must be embedded in classical probability theory ~\cite{schmid2021characterization}.
Moreover, {Spekkens's} noncontextuality also encapsulates }{ Gaussian quantum mechanics ~\cite{bartlett2012reconstruction}, and subsumes the well-known notions of Kochen-Specker noncontextuality \cite{KunjwalSpekkens_KSwithoutDeterminism,kunjwal2018statistical, Kunjwal_2019,Kunjwal_2020} and local causality \cite{schmid2018all}.  In this section, we briefly describe this idea of classicality.}

{Spekkens' framework is operational and allows for a representation of lab procedures -  a list $\mathcal{P}$ of preparations, a list $\mathcal{M}$ of measurements, and a set of effects $\{b|M\}_{b}$ for each measurement $M$ in $\mathcal{M}$}.
In particular, {it is assumed that convex combinations of preparations in $\mathcal{P}$ also result in valid preparations.}
In general, there will be equivalences among procedures of the same type: a preparation $P_1\in\mathcal{P}$ is equivalent to $P_2\in \mathcal{P}$ in a prepare-and-measure scenario when, for all \emph{conceivable} measurement procedures, $p(b|M,P_1)=p(b|M,P_2)$. That is, one cannot differ those {preparation} procedures from the probabilities arising on \emph{any} measurement -- or, equivalently for a subset of measurements called tomographically complete. The above relation defines equivalence classes of preparations, each class denoted by $[P]$. For instance, in quantum theory, a density matrix could label an equivalence class: all the alternative ways to produce a certain $\rho\in\mathcal{D}(\mathcal{H})$ will always lead to the same statistics. Similarly, we can define the equivalences for events of measurements: they are equivalent when no \emph{conceivable} preparation leads to different probabilities. In quantum theory, these can be labeled by the elements of POVMs. We denote the sets of all equivalences of each type of procedure as ${\rm Equiv}\{\mathcal{P}\}$ and ${\rm Equiv}\{\mathcal{M}\}$. 

As an attempt to {classically} explain how the probabilities arise within the operational theory, one might try to construct an ontological model. Ontological models in prepare-and-measure scenarios comprise of three ingredients: a measurable space $(\Lambda,\Sigma)$; a linear map $\mu: \mathcal{P}\mapsto (\mu_P)_{P\in \mathcal{P}}$, where $\mu_P(\lambda)$ is a probability measure over $\Lambda$; and another linear map $\xi: (b|M) \mapsto \xi_M(b|\lambda)$, such that each measurement $M\in\mathcal{M}$ is mapped to a set of response functions $(\xi_M(b|\lambda))_b$ satisfying $\xi_M(b|\lambda)\geq0$, $\sum_b \xi_M(b|\lambda)=1\,\, \forall \lambda$.
Linearity of the maps defined above implies that convex mixtures of procedures should be mapped to the appropriate convex mixtures of the corresponding ontic objects. Finally, the probabilities of the operational theory must be reproduced by ontological models via $p(b|M,P)=\sum_\lambda \mu_P(\lambda)\xi_M(b|\lambda)${~\cite{Leifer14}}. Up to here, any operational theory will accept explanations via some ontological model. However, the important aspect is { whether it is possible} to recover  operational probabilities via a \emph{noncontextual ontological model}~\cite{Spekkens_2005_C}:
\begin{definition}[Noncontextual ontological models] Consider an operational theory with procedures $\mathcal{P,M}$ and equivalences ${\rm Equiv}\{\mathcal{P}\},{\rm Equiv}\{\mathcal{M}\}$. An ontological model recovering the probabilities of the operational theory is said to be \textbf{preparation noncontextual} if:
\begin{align}
    P\in[P']\implies \mu_P = \mu_{P'}.
\end{align}
Analogously, the model is said to be \textbf{measurement noncontextual} if
\begin{align}
    b|M\in[b'|M']\implies \xi_M(b|\lambda) = \xi_{M'}(b'|\lambda)\,\,\forall \lambda\in\Lambda.
\end{align}
If both conditions are satisfied, the ontological model is said to be \textbf{universally noncontextual}, or noncontextual, for brief.
\end{definition}

{This definition says that} procedures that are operationally indistinguishable should be represented by the same entities on the ontological model. {The essence of this definition of noncontextuality is the so-called principle of the identity of indiscernibles, due to Leibniz \cite{Spekkens_2005_C, spekkens2019_Leibniz}. Roughly, it says that two empirically indistinguishable things should be ontologically described as \emph{identical}. The methodology implied by (generalizations of) this principle was important in building classical theories in the past; for instance, it played an important role in the birth of relativity\cite{spekkens2019_Leibniz}. } 

{ Even though noncontextuality is a natural feature to {demand} of physical theories,} quantum theory is known to be a `contextual' operational theory: there is no noncontextual ontological model {reproducing} { all of its statistics ~\cite{KS,Spekkens_2005_C}}.
{ This remarkable nonclassical feature of quantum theory has important practical consequences: contextuality is a useful resource, as it provides genuine quantum advantage over classical computation.} {{ Indeed, contextuality is {the} {crucial} resource in information-processing~\cite{spekkens2009preparation,saha2019preparation,ghorai2018optimal,ambainis2019parity,saha2019state,chailloux2016optimal}, and provides advantage for quantum state discrimination~\cite{schmid2018contextual}, quantum cloning~\cite{lostaglio2020contextual}, quantum metrology~\cite{lostaglio2020certifying} and quantum machine learning~\cite{gao2021enhancing}.  It also constitutes a necessary resource for models of restricted quantum computation~\cite{schmid2021only,howard2014contextuality,raussendorf2013contextuality,mansfield2018quantum}, capable of providing robust numerical bounds of advantage~\cite{abramsky2017contextual}, not only for deterministic but also for probabilistic restricted quantum computation. }}

{In summary, Spekkens{'s} notion of noncontextuality is a { mathematically well-defined} and { amply used} notion of classicality.  { In this sense, it is fair to assume that the} emergence of generalized noncontextuality in an intrinsically contextual theory {can be} naturally understood as a classical limit { -- with a very precise meaning}. In what comes, we  show a deep connection between classicality in these terms and Quantum Darwinism objectivity}.

\section{Results}

\subsection{Emergence of noncontextuality}

{T}o approach the question of emergence of noncontextuality{, as a proxy for { emergence of} classicality} in the considered dynamics, we need to  define what is the scenario we care about -- we need to say what are the procedures and operational equivalences, thus establishing what a noncontextual model must 
respect. As {demanded} by the { Environment as a Witness} paradigm, we {focus on} Bob's perspectives:  \emph{to each Bob}, we must say what the different preparations, measurements and equivalences {are}. {Measurement procedures are straightforward}: each Bob implements a measurement procedure $M^{B_j}\in\mathcal{M}^{B_j}$ on the system he receives. Therefore, the equivalence classes are {identified with} the elements of POVM at each Bob's side.

A more subtle subject in our case is defining preparations and equivalences thereof. Suppose, for sake of simplicity, that there is an { experimenter} (say, Alice)   preparing {the} {quantum states of system} $A$, according to a set of possible preparations $\mathcal{P}$; then, the equivalence class of $P\in\mathcal{P}$ would be $[\rho_P^A]$  ({if there \emph{really is} an { experimenter} is not of importance to this work. The procedures can be understood as any specific sequence of events ``preparing''  system $A$}).    
However, each Bob will \emph{infer} information about system $A$ \emph{by making measurements on their share of $A$'s environment}. Therefore, {from} Bob $B_j$'s perspective, every preparation of system $A$ is always followed by a fixed transformation $T_j$, which we can describe as: interact $A$ with a {large} {number} of systems in a predefined way and discard all subsystems but $B_j\in B_{S_t}$. Thus, the preparations arriving {at} $B_j$ are effectively $\mathcal{P}'_j:=T_j(\mathcal{P})$ (see {fig.} \ref{fig:ScenarioBj}).
{From Bob's standpoint,} the equivalence classes for preparations must then be defined by the density matrix {available for Bob}, i.e. $\Phi^{B_j}(\rho_P^A)$. In sum, the scenario of each $B_j$ is {composed of} preparations $\mathcal{P}_j'$ with equivalences $[\Phi^{B_j}(\rho^A)]$, while measurement procedures are the local implementations $\mathcal{M}^{B_j}$, with equivalence classes for effects represented by elements of POVMs.
\begin{figure}[ht]
    \centering
    \includegraphics[width=\columnwidth]{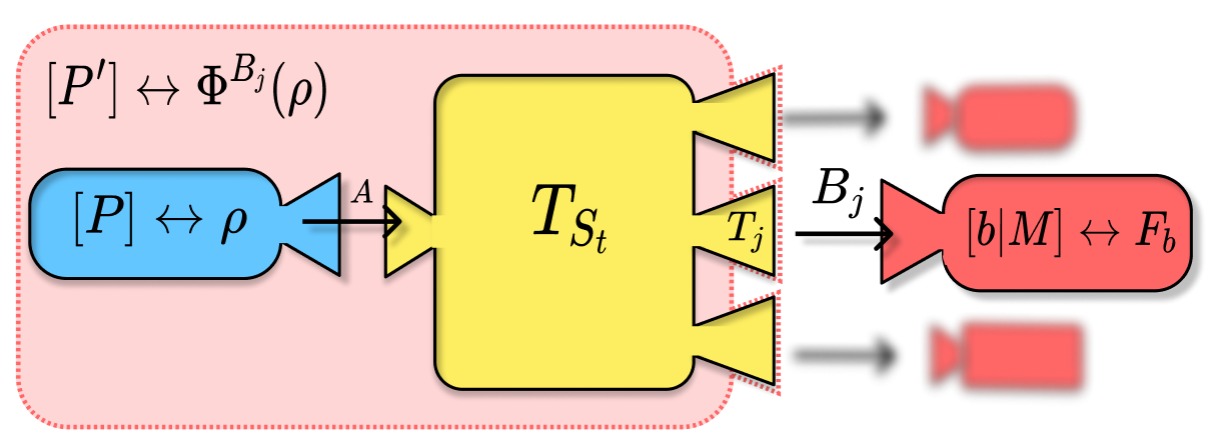}
    \caption{\textbf{Scenario from Bob $B_j$'s perspective}: Preparation $P\in\mathcal{P}$ of the central system is followed by \emph{a fixed} transformation (i.e., interaction with the environment, always prepared in the same way){, leading to an effective preparation $P'=T_j(P)$}. The non-traced out portion of the environment is given by the labels $S_t$ (here $t=3$). For Bob, $[P']\leftrightarrow \Phi^{B_j}(\rho^A)$ and $[b|M]\leftrightarrow F_b$.}
    \label{fig:ScenarioBj}
\end{figure}

Now we have set the stage to present our results. Bellow, we show a mathematical condition on the encoding states $\{\sigma^{B_j}_k\}_k$ of an EW$_{t}$ dynamics that, if satisfied for some $B_j\in S_t$, ensures that there exists a noncontextual ontological model recovering the statistics for those Bobs.
\begin{lemma}[Affinely independent $\{\sigma^{B_j}_k\}_k$ {implies} { noncontextuality}]
\label{lemma: AIimplyNC}
Consider an EW$_t$-dynamics, {with an infinite environment}, defined by the pairs $(\tilde{E}_k,\sigma^{S_t}_k)_k$. If, for a given $B_j\in S_t$, the environmental systems $\{\sigma^{B_j}_k\}_k$ form an affinely independent set of matrices on $\mathcal{D}(\mathcal{H}_{B_j})$, then there exists a noncontextual ontological model for the distribution about $A$ available to $B_j$. {This holds true for any measurement $\{F^{B_j}_b\}$ implemented by  Bob $B_j$}. 

\begin{proof}
We prove this result by constructing such a noncontextual ontological model. As mentioned above, the equivalence classes for the `preparations' arriving to $B_j$ are defined by the quantum states $\sigma^{B_j}_{P'}=\sum{\rm Tr}[\tilde{E}_k\rho_P^A]\sigma^{B_j}_k$. Affine independence of $\{\sigma^{B_j}_k\}_k$ implies the bijection
\begin{align}
\label{eq:bijectionFromAI}
    \sigma^{B_j}_{P'_1}=\sigma^{B_j}_{P'_2} \iff {\rm Tr}[\tilde{E}_k\rho^A_{{P}_1}] = {\rm Tr}[\tilde{E}_k\rho^A_{{P}_2}]\,\,\forall k.
\end{align}
(Such a bijection can also be seen by the fact that  affine independence of $\{\sigma^{B_j}_k\}_k$ imply that ${\rm ConvHull}\left[\{\sigma^{B_j}_k\}_k\right]$ is a $(k_{\rm max}-1)$-simplex on $\mathcal{D}(\mathcal{H}_{B_j})$; thus, every point has a unique convex decomposition in terms of the $k_{\rm max}$ vertices).
Therefore, the equivalences can be ultimately labeled by the distribution $({\rm Tr}[\tilde{E}_k\rho_P^A])_k$. Now, consider the following ontological model:
\begin{subequations}
\begin{align}
    \Lambda &:=\{k\}{_k};\\
    \mu_{P'}(k)&:={\rm Tr}[\tilde{E}_k\rho_P^A];\\
    \xi_M(b|k) &:= {\rm Tr}[F^{{B_j}}_b\sigma_k].
\end{align}
\label{eq: NCOMQD}
\end{subequations}
This indeed respects the equivalence classes of the scenario of $B_j$: $\mu_{P'}$ depends uniquely on the distribution ${\rm Tr}[\tilde{E}_k\rho_P^A]$, which labels each class $[P']$, while $\xi_M$ depends only on $F^{{B_j}}_b$, label of each $[b|M]$. It remains to show that this ontological model recovers the statistics. Since the probabilities are obtained via a measure-and-prepare channel defined by $(\tilde{E}_k,\sigma_k)_k$, we have $p(b|M,P')=\sum_k {\rm Tr}[\tilde{E}_k\rho_P^A]{\rm Tr}[F_b \sigma_k]=\sum_k \mu_{[P']}(k)\xi_{[M]}(b|k)$. 
\end{proof}
\end{lemma}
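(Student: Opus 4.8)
The plan is to prove the lemma constructively: exhibit an explicit ontological model and verify that it is (i) well-defined on equivalence classes, (ii) reproduces the operational statistics, and (iii) is both preparation- and measurement-noncontextual. The ontic space is the natural candidate $\Lambda := \{k\}_k$, the label set of the selected POVM $\{\tilde{E}_k\}_k$, since this is the ``hidden classical variable'' that quantum Darwinism singles out. I would then set $\mu_{P'}(k) := {\rm Tr}[\tilde{E}_k\rho_P^A]$ for the epistemic states and $\xi_M(b|k) := {\rm Tr}[F^{B_j}_b\sigma^{B_j}_k]$ for the response functions. These are manifestly nonnegative and normalized (the former because $\{\tilde{E}_k\}_k$ is a POVM and $\rho_P^A$ a state, the latter because $\{F^{B_j}_b\}_b$ is a POVM and $\sigma^{B_j}_k$ a state), so they are legitimate probability measures and response functions.

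First I would establish the crucial structural fact that makes the model noncontextual: the equivalence relation among preparations as seen by $B_j$ must coincide with equality of the distributions $({\rm Tr}[\tilde{E}_k\rho_P^A])_k$. Because every preparation reaching $B_j$ has the measure-and-prepare form $\sigma^{B_j}_{P'}=\sum_k {\rm Tr}[\tilde{E}_k\rho_P^A]\,\sigma^{B_j}_k$, two preparations are operationally equivalent for $B_j$ exactly when these mixtures are equal as density matrices. The hypothesis of affine independence of $\{\sigma^{B_j}_k\}_k$ is precisely what forces the convex-combination coefficients to be unique, giving the biconditional in Eq.~\eqref{eq:bijectionFromAI}: equal output states if and only if equal distributions $({\rm Tr}[\tilde{E}_k\rho_P^A])_k$. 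Equivalently, the convex hull of the $\{\sigma^{B_j}_k\}_k$ is a genuine $(k_{\rm max}-1)$-simplex embedded in $\mathcal{D}(\mathcal{H}_{B_j})$, and simplices are exactly the convex bodies in which every point has a unique barycentric decomposition. I expect this step -- translating affine independence into uniqueness of the convex decomposition and hence into the labeling of equivalence classes -- to be the conceptual crux of the argument.

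With the bijection in hand, preparation noncontextuality follows immediately: if $P'_1$ and $P'_2$ lie in the same class $[P']$, then by Eq.~\eqref{eq:bijectionFromAI} the distributions $({\rm Tr}[\tilde{E}_k\rho_{P_i}^A])_k$ agree, so $\mu_{P'_1}=\mu_{P'_2}$ by construction; the epistemic state depends only on the class label. Measurement noncontextuality is even more direct, since $\xi_M(b|k)$ depends on the measurement effect only through the POVM element $F^{B_j}_b$, which is the label of the effect-equivalence class $[b|M]$ in quantum theory. I would also check linearity of the maps $\mu$ and $\xi$ under convex mixtures, which holds because trace is linear and convex mixtures of preparations map to convex mixtures of the corresponding distributions.

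Finally I would verify that the model reproduces the statistics. Since $\Phi^{B_j}$ is effectively the measure-and-prepare channel $(\tilde{E}_k,\sigma^{B_j}_k)_k$, the operational probability is
\begin{align}
    p(b|M,P')={\rm Tr}\!\left[F^{B_j}_b\,\sigma^{B_j}_{P'}\right]=\sum_k {\rm Tr}[\tilde{E}_k\rho_P^A]\,{\rm Tr}[F^{B_j}_b\sigma^{B_j}_k]=\sum_k \mu_{P'}(k)\,\xi_M(b|k),
\end{align}
which is exactly the ontological decomposition required. The chain of equalities uses only linearity of the trace together with the measure-and-prepare form, so no approximation enters once the infinite-environment assumption has collapsed Eq.~\eqref{eq: ObjectivityObs} to an equality. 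Since the argument holds for an arbitrary local measurement $\{F^{B_j}_b\}$, the constructed model is simultaneously preparation- and measurement-noncontextual for all of $B_j$'s measurements, completing the proof.
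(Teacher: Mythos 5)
Your proposal is correct and follows essentially the same route as the paper's own proof: the identical ontological model $\bigl(\Lambda=\{k\}_k,\ \mu_{P'}(k)={\rm Tr}[\tilde{E}_k\rho_P^A],\ \xi_M(b|k)={\rm Tr}[F^{B_j}_b\sigma^{B_j}_k]\bigr)$, with affine independence used to establish the bijection between output states and distributions via uniqueness of barycentric coordinates in the simplex. Your additional explicit checks of positivity, normalization, and linearity are sound and merely make precise what the paper leaves implicit.
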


{
\begin{remark*}
The proof above has interesting connections to characterisations of noncontextuality in the framework of generalized probabilistic theories \cite{schmid2021characterization,shahandeh2021contextuality}. { In the Appendix \ref{app:NCGPT}, we make this connection more explicit for the interested reader.}
\end{remark*}
}

{Lemma~\ref{lemma: AIimplyNC}} shows how noncontextuality may \emph{emerge} in an {EW$_t$ dynamics : g}enerally, there is no noncontextual model for the scenario with full quantum probabilities, with { all possible} $\mathcal{P}$ and { their equivalences} ${\rm Equiv}\{\mathcal{P}\}$;
however, {the EW$_{t}$-}dynamics maps $[\rho^A]\mapsto[\Phi^{B_j}(\rho^A)]$,  {enabling} a noncontextual explanation.
{ Nonetheless, the condition of affine independence of $\{\sigma_k\}_k$ is not clearly motivated within quantum Darwinism. In other words, up until this point, Lemma \ref{lemma: AIimplyNC} is merely a mathematical condition signaling the existence of noncontextual models within EW$_t$-dynamics. Therefore, a connection between emergence of noncontextuality and emergence of objectivity under quantum Darwinism remains open. } 

{ By looking at the extreme case in which $p_{\rm guess}[(\sigma^{B_j}_k,\tilde{p}_k)_k]=1$ for some Bob, we find an indicative of such a connection: affine independence of $\{\sigma^{B_j}_k\}_k$ is necessary for perfect encoding of the distribution $({\rm Tr}\{\tilde{E}_k\rho^A\})_k$. Indeed}, if the distribution $({\rm Tr}\{{\tilde{E}}_k \rho^A\})_k$ is perfectly encoded {in} the states $\{\sigma^{B_j}_k\}_k$, these must be deterministically distinguishable; in turn, these states must have orthogonal support, thus being affinely independent. 
{
Moving away from the extreme case, w}e can get affine independence even if the discrimination is not perfect, but good enough. This is summarized on the following lemma {-- its} proof deferred to the Appendix \ref{App:lemma3}.
{Remarkably}, this proof can be of interest to quantum state discrimination: if the probability of success is high enough, we get an information about the geometrical disposition of $\{\sigma^{B_j}_k\}_k$, namely, that they form vertices of a ($k_{\rm max}$-1)-simplex in $\mathcal{D}(\mathcal{H}_{B_j})$.

\begin{lemma}[Operational signature implying affine independence]
\label{lemma: OperationalSignAI}
Consider an EW$_t$-dynamics defined by $({\tilde{E}}_k,\sigma_k^{B_j}){_k}$.
Then, there exists a bound $\hat{P}[({\tilde{E}}_k)_k]$ such that, if $p_{\rm guess}[(\tilde{p}_k,\sigma^{B_j}_k)]> \hat{P}[({\tilde{E}}_k)_k]\,\forall \rho^A$ {(recall that $\tilde{p}_{k}$ depends on the central system's state $\rho^{A}$)}, the set of states $\{\sigma^{B_j}_k\}_k$ must be affinely independent.
\end{lemma}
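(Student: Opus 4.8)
The plan is to prove the contrapositive: assuming the states $\{\sigma_k^{B_j}\}_k$ are affinely \emph{dependent}, I will exhibit a threshold $\hat P[(\tilde E_k)_k]<1$ and a central-system state $\rho^A$ for which $p_{\rm guess}[(\tilde p_k,\sigma_k^{B_j})]\le\hat P$. It is convenient to package the quantifier over $\rho^A$ into the single function
\begin{equation}
g(\sigma):=\min_{\rho^A}\,p_{\rm guess}[(\tilde p_k,\sigma_k^{B_j})],\qquad \tilde p_k=\Tr[\tilde E_k\rho^A],
\end{equation}
defined on tuples $\sigma=(\sigma_k^{B_j})_k\in\mathcal{D}(\mathcal{H}_{B_j})^{k_{\max}}$. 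The hypothesis of the lemma is exactly $g(\sigma)>\hat P$ for the relevant $\sigma$, so it suffices to set $\hat P[(\tilde E_k)_k]:=\max\{g(\sigma):\sigma\ \text{affinely dependent}\}$ and prove this maximum is strictly below $1$; then $g(\sigma)>\hat P$ forces $\sigma$ out of the affinely dependent set.

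The core is a pointwise claim: \emph{if $\sigma$ is affinely dependent then $g(\sigma)<1$.} To see this I would evaluate $p_{\rm guess}$ at the maximally mixed state $\rho^A=\id/d_A$, giving the prior $\tilde p_k=\Tr[\tilde E_k]/d_A$; since each $\tilde E_k\neq 0$ is positive, every $\tilde p_k>0$. Were $g(\sigma)=1$, this all-positive-prior ensemble would be perfectly discriminable, i.e. there would be a POVM $\{F_k\}$ with $\Tr[F_k\sigma_k^{B_j}]=1$ for every $k$; the standard argument (from $\Tr[(\id-F_k)\sigma_k^{B_j}]=0$ together with $\sum_k F_k=\id$) then forces the supports of the $\sigma_k^{B_j}$ to be pairwise orthogonal, hence the $\sigma_k^{B_j}$ linearly -- and a fortiori affinely -- independent, contradicting affine dependence. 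This orthogonality-from-perfect-discrimination step is also where the quantum-state-discrimination reading enters: at the extreme $p_{\rm guess}=1$ the $\{\sigma_k^{B_j}\}_k$ must be vertices of a $(k_{\max}-1)$-simplex. One can make the slack quantitative: writing an affine dependence as $\sum_{k\in P}c_k\sigma_k^{B_j}=\sum_{k\in Q}|c_k|\sigma_k^{B_j}$ with $\sum_{k\in P}c_k=\sum_{k\in Q}|c_k|=s$, the prior $\tilde p_k\propto|c_k|$ supported on $P\cup Q$ makes every measurement confuse the two sub-ensembles and caps $p_{\rm guess}$ at $1/2$.

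To upgrade this pointwise strict inequality to a uniform bound I would invoke compactness. In finite dimension the set $D$ of affinely dependent tuples is closed (it is the locus where the matrix with columns $(\sigma_k^{B_j},1)$ drops rank), hence a compact subset of $\mathcal{D}(\mathcal{H}_{B_j})^{k_{\max}}$. The map $g$ is continuous: $p_{\rm guess}$ is a supremum of functions that are affine -- indeed $1$-Lipschitz in trace norm, uniformly in the prior -- in $\sigma$, and taking $\min_{\rho^A}$ over the compact set of states preserves Lipschitz continuity. Therefore $\hat P=\max_{\sigma\in D}g(\sigma)$ is attained and, by the pointwise claim applied at the maximizer, $\hat P<1$. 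Finally, if the lemma's hypothesis $p_{\rm guess}>\hat P$ holds for all $\rho^A$ then $g(\sigma)>\hat P=\max_D g$, which is incompatible with $\sigma\in D$; hence $\{\sigma_k^{B_j}\}_k$ is affinely independent.

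The step I expect to be the real obstacle is obtaining an \emph{explicit}, POVM-dependent value of $\hat P$ rather than the mere existence provided by compactness. The clean $1/2$ confusability bound above requires preparing the prior $\tilde p_k\propto|c_k|$ exactly supported on $P\cup Q$, but a general (in particular informationally complete) POVM $\{\tilde E_k\}_k$ cannot realize a prior that vanishes on the complement of $P\cup Q$, so weight necessarily leaks onto the distinguishable outcomes and dilutes the bound by an amount governed by the geometry of the achievable prior region $\{(\Tr[\tilde E_k\rho^A])_k\}$. Controlling this leakage is exactly what forces $\hat P$ to depend on $(\tilde E_k)_k$, and turning it into a closed-form threshold -- rather than the compactness-based existence argument -- is the part requiring the most care, together with the finite-dimensionality of each $B_j$ that the compactness step relies on.
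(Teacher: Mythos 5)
Your proof is correct, but it takes a genuinely different route from the paper's. You establish a sub-unit threshold \emph{non-constructively}: pointwise, affine dependence plus the strictly positive prior generated by $\rho^A=\mathds{1}/d_A$ (using $\tilde{E}_k\neq 0$) rules out $p_{\rm guess}=1$ via the standard perfect-discrimination-implies-orthogonal-supports argument, and then closedness of the affinely dependent locus $D$ together with (Lipschitz) continuity of $g$ upgrades the strict pointwise inequality to $\hat{P}:=\max_{D}g<1$; since $\min_{\rho^A}p_{\rm guess}$ is attained by compactness, the hypothesis indeed forces $g(\sigma)>\hat{P}$ and hence $\sigma\notin D$. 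The paper instead constructs $\hat{P}$ in closed form, $\hat{P}[(\tilde{E}_k)_k]=1-\tfrac{1}{2}\max_{\rho^A\in\mathcal{S}}\min_k{\rm Tr}[\tilde{E}_k\rho^A]$ with $\mathcal{S}$ the set of states giving all priors nonzero, and proves the contrapositive quantitatively: by Carath\'eodory's theorem, affine dependence yields a second convex decomposition $\sigma_T=\sum_k q_k\sigma_k$ of the average state with some $q_{k^*}=0$, whence the statistical distance obeys $D(\{q_k\},\{\tilde{p}_k\})\geq\min_k\tilde{p}_k$; combined with a case analysis (if ${\rm Tr}[F^*_k\sigma_k]\leq\tfrac{1}{2}$ for some $k$ the bound is already respected, otherwise all such traces exceed $\tfrac{1}{2}$), this lower-bounds the error probability by $\min_k\tilde{p}_k/2$ at the state $\bar{\rho}$ optimizing the bound. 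The trade-off is clear: your $\hat{P}$ is by construction the tightest possible threshold and your argument is shorter, but it is not computable and, as defined, also depends on $\dim(\mathcal{H}_{B_j})$ through the domain of the tuples, despite the notation $\hat{P}[(\tilde{E}_k)_k]$; the paper's bound depends only on the POVM (and is dimension-independent, which in fact shows your supremum stays below $1$ uniformly in $\dim\mathcal{H}_{B_j}$), and its explicit value is what the rest of the paper consumes -- the cut-off in the main theorem, the definition of emergent classical objectivity, and the computation $\hat{P}[(\ket{k}\bra{k})_k]=1-\tfrac{1}{2d_A}$ for State Spectrum Broadcasting. You correctly identified the ``leakage'' of realizable priors $({\rm Tr}[\tilde{E}_k\rho^A])_k$ as the obstacle to explicitness; the paper's device of restricting to $\mathcal{S}$ and evaluating at the extremal $\bar{\rho}$ is exactly how that obstacle is controlled.
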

\begin{remark*}For the condition above to be valid, it is necessary that ${\rm dim}(\mathcal{H}_{B_j})\geq k_{\rm max}$.
\end{remark*}
In the Appendix (sec. \ref{App:lemma3}), we construct the bound $\hat{P}$, which is generally not trivial, i.e., there are cases where $\hat{P}<1$. Therefore, lemma \ref{lemma: OperationalSignAI} tells us that, if Bob $B_j$ can recover well enough the statistics of $({\rm Tr}\{{\tilde{E}_k\rho^A}\})_k$ 
through measurements on his system {(i.e., $p_{\rm guess} >\hat{P}$ for that Bob)}, $\{\sigma^{B_j}_k\}$ are affinely independent. Thus, by lemma \ref{lemma: AIimplyNC} he can construct a noncontextual model for such statistics. { Now,} if such good distinguishability also holds for every Bob $B_j\in S_t$, { a process of QD$_\eta$ occurs  with $\eta>\hat{P}$. In this case, while Brand\~{a}o, Piani and Horodecki's results \cite{BPH_2015_QD} put a lower bound on the probability that all Bobs will see the same outcome {{when making the correct measurements}},} the above results imply that \emph{every Bob can explain their results using a noncontextual model}. This is the summary of the following theorem, our main result:
\begin{theorem}[\textbf{Main result:} emergence of { noncontextuality} under QD$_\eta$]
\label{theorem: NCunderQD}
Suppose the conditions for QD$_\eta$ are met with $\eta>\hat{P}[({\tilde{E}}_k)_k]$. Then, each Bob can construct a noncontextual ontological model for the respective scenario.
\begin{proof}
If the conditions are met, it must hold that:
\begin{align}
    \min_{\rho} p_{\rm guess}[(\sigma_k^{B_j}),({\rm Tr}\{\tilde{E}_k\rho\})_k] >\hat{P}[(\tilde{E}_k)_k]\,\,\forall B_j\in S_t,
    \label{ineq:ViolationEveryBob}
\end{align}
which, in turn, imply
\begin{align}
    p_{\rm guess}[(\sigma_k^{B_j}),({\rm Tr}\{\tilde{E}_k\rho\})_k] >\hat{P}[(\tilde{E}_k)_k]\,\,&\forall \rho^A\in\mathcal{D}(\mathcal{H}_A)\nonumber\\
    &\text{ and } B_j\in S_t.
\end{align}

Therefore, by lemma \ref{lemma: OperationalSignAI} and Ineq. \eqref{ineq:ViolationEveryBob}, the states $\{\sigma^{B_j}_k\}$ must be affinely independent, \emph{for every} $B_j\in S_t$. Then, lemma \ref{lemma: AIimplyNC} guarantees that every Bob will be able to construct a noncontextual ontological model for their scenario.
\end{proof}
\end{theorem}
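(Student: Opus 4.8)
The plan is to obtain the theorem as a direct consequence of the two preceding lemmas, with the hypothesis $\eta>\hat P[(\tilde E_k)_k]$ serving precisely as the bridge between them. The structure I would follow is: first, translate the QD$_\eta$ hypothesis into the uniform distinguishability bound demanded by Lemma~\ref{lemma: OperationalSignAI}; second, invoke that lemma to secure affine independence of the encoding states, Bob by Bob; third, feed affine independence into Lemma~\ref{lemma: AIimplyNC} to build the noncontextual model for each Bob.

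First I would unpack the definition of a QD$_\eta$ process. By Eq.~\eqref{eq:QDeta}, for every $B_j\in B_{S_t}$ one has $\min_{\rho^A} p_{\rm guess}[(\tilde p_k,\sigma^{B_j}_k)_k]\geq\eta$. Since the minimum over $\rho^A$ is bounded below by $\eta$, and $\eta>\hat P[(\tilde E_k)_k]$ by assumption, it follows immediately that $p_{\rm guess}[(\tilde p_k,\sigma^{B_j}_k)_k]>\hat P[(\tilde E_k)_k]$ for \emph{every} central-system state $\rho^A\in\mathcal{D}(\mathcal{H}_A)$ and \emph{every} $B_j\in B_{S_t}$. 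The one thing to watch here is the direction of the quantifiers: passing from ``the worst case over $\rho^A$ exceeds $\hat P$'' to ``every $\rho^A$ exceeds $\hat P$'' is a tautology, but it is exactly the universally quantified statement that Lemma~\ref{lemma: OperationalSignAI} takes as its hypothesis, so this rephrasing is the logical crux of the argument.

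With that uniform bound in hand, I would apply Lemma~\ref{lemma: OperationalSignAI} separately to each $B_j\in B_{S_t}$. For each fixed $j$ the hypothesis $p_{\rm guess}>\hat P$ for all $\rho^A$ is met, so the lemma yields that $\{\sigma^{B_j}_k\}_k$ is an affinely independent set in $\mathcal{D}(\mathcal{H}_{B_j})$ (implicitly invoking the dimension condition ${\rm dim}\,\mathcal{H}_{B_j}\geq k_{\rm max}$ from the remark). Then I would apply Lemma~\ref{lemma: AIimplyNC} to the same $B_j$: affine independence of $\{\sigma^{B_j}_k\}_k$ guarantees a noncontextual ontological model reproducing the statistics $p(b|M,P')$ available to Bob $B_j$, valid for any local measurement $\{F^{B_j}_b\}$. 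Since $j$ was arbitrary, every Bob obtains such a model, which is the claim.

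The step I expect to carry the genuine difficulty has already been discharged in the lemmas rather than in this theorem: constructing the nontrivial threshold $\hat P$ and proving that exceeding it forces affine independence (Lemma~\ref{lemma: OperationalSignAI}, deferred to the appendix) is where the real work lies. At the level of the theorem itself there is no obstacle beyond bookkeeping --- the comparison $\eta>\hat P$ and the per-Bob chaining of the two lemmas --- so the only subtlety worth verifying is that the bound $\hat P$ depends solely on $(\tilde E_k)_k$ and not on the label $j$, so that a single threshold condition $\eta>\hat P$ simultaneously certifies affine independence, and hence noncontextuality, for all Bobs at once.
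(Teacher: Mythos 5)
Your proposal is correct and follows essentially the same route as the paper's proof: unpacking the QD$_\eta$ condition into $p_{\rm guess}>\hat{P}$ for every $\rho^A$ and every $B_j$, applying Lemma~\ref{lemma: OperationalSignAI} per Bob to obtain affine independence, and then Lemma~\ref{lemma: AIimplyNC} to construct each noncontextual model. Your added observations (the quantifier direction and the $j$-independence of $\hat{P}$) are accurate and consistent with the paper's argument.
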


\subsection{A cut-off for classical objectivity in QD}
{

The emergence of objectivity in QD$_\eta$ processes has a continuous form: the closer $\eta$ is to $1$, {the} `more objectivity' arises due to Darwinism; only in the idealized limit $\eta=1$, ``perfect objectivity'' is obtained. { {This leads to problems, as there is no cut-off on $\eta$ defining if objectivity has emerged or not -- and all we can do without a cut-off is to say that some process may lead to more objectivity than some other. That is, whenever $\eta\neq 1$, one may not be able to say, safely,  that `objectivity has emerged'.
If we decide to choose an arbitrary bound for $\eta$ close enough to $1$, above which we are confident to say that `objectivity has emerged', we also face some issues.}}

First, the definition of objectivity itself becomes subjective, since what one considers `close enough to $1$' is certainly not objective. Second, and most importantly,  we want any cut-off signalling objectivity to ensure emergence of noncontextuality as well, as this provides a well-motivated and broad notion of classicality. Our results tell us that sufficiently high values of $\eta$ ensure this is true -- but how high $\eta$ should be depends on the monitored observable $\{\tilde{E}_k\}_k$. In fact, for any arbitrary fixed bound $\eta$ one chooses, there is always a central system and specific dynamics in which $\eta<\hat{P}[(\tilde{E}_k)_k]$ (see Appendix \ref{App:DecAndSSB} for an example of this behaviour on complete decoherence dynamics). In these cases, we have no guarantee that noncontextuality emerges, even though $\eta$ would be `close enough to $1$' according to an arbitrary choice. 

To avoid those problems, and in view of our results, we propose $\hat{P}$ to be such a cut-off for classical objectivity. This proposal gives a bound to talk about objectivity which is neither subjective nor signals objectivity without noncontextuality. 

\begin{definition}[Emergence of classical objectivity under QD$_\eta$] \label{def:cut-off}
Consider a process QD$_\eta$, with
observable $\{\tilde{E}_k\}_k$. We say that \textbf{classical objectivity} has emerged under this process if $\eta>~\hat{P}[(\tilde{E}_k)_k]$.
\end{definition}

As said above, setting a {non-subjective} bound for objectivity is {crucial}: we are now able to say whether classical objectivity has emerged or not in QD$_\eta$ processes depending on $\eta$. By taking $\hat{P}$ as such cut-off, we are sure noncontextuality has emerged and the bound \emph{depends on the monitored observable}, $\{\tilde{E}_k\}_k$. This makes sense, as insensitivity to the specifics of the interaction is rather artificial.
In other words, central systems with different dimensions and different dynamics, leading to selection of different pointer observables, \emph{should} impact the decision on whether classical objectivity has arisen or not. This is also the case in other approaches to quantum Darwinism, where mutual information between the system and a fragment of the environment is compared to the von Neumann entropy of the system, which depends on the initial state of $A$ and on the dynamics \cite{Zurek_2003_Review_QD0}.

Moreover, since $\eta>\hat{P}$ is not a necessary condition for noncontextuality to have emerged, but a sufficient condition obeyed with sufficiently high $p_{\rm guess}$ for each Bob, it sets a good threshold that will not be artificially close to zero -- this could happen if we only asked for emergence of noncontextuality, as affinely independent states could be almost indistinguishable. Indeed, the perfect decoherence example shows that $\hat{P}$ can be fairly close to $1$, setting a high lower bound on the probability that all Bobs will see the same outcome when making the correct measurements. Therefore, the above definition proposes a new and physically well-motivated way to treat emergence of Darwinist objectivity within the framework of \citet*{BPH_2015_QD}. 
}
\subsection{The case of State Spectrum Broadcasting}
{
Before we conclude, let us mention some interesting byproduct of Theorem~\ref{theorem: NCunderQD}:  noncontextuality necessarily emerges under the process of State Spectrum Broadcasting \cite{Horodecki_SSB_2015}. This process was proposed as an alternative way to reach objectivity from the interaction of quantum systems. In such a process, the central system suffers full decoherence and its spectrum is perfectly broadcast to the environment.  As we show in the Appendix (see \ref{App:DecAndSSB} for more details), it can be described as a special form of QD$_\eta$ process adapted to deal with a post-interaction state of the central system $A$. It demands full distinguishability, so we have $\eta=1$ and full decoherence implies $\tilde{E}_k=\ket{k}\bra{k}$, with $\{\ket{k}\bra{k}\}_k$ a basis of $\mathcal{H}_A$. This form of $\tilde{E}_k$ leads to a $\hat{P}<1=\eta$. By applying Theorem~\ref{theorem: NCunderQD}, we get
\begin{corollary}[Emergence of noncontextuality under State Spectrum Broadcasting]
If the EW$_t$-dynamics leads to the occurrence of State Spectrum Broadcasting process for arbitrary initial states $\rho^A$, all the Bobs can construct a noncontextual ontological model to their statistics.
\begin{proof}
As we mentioned above, if a State Spectrum Broadcasting process occurs, we have $\eta=1$ and $\tilde{E}_k$ is a projection into a basis of $\mathcal{H}_A$. We prove in the appendix that, for such kind of observable $\tilde{E}_k$, the bound $\hat{P}$ gives $\hat{P}[(\ket{k}\bra{k})_k]=1-\frac{1}{2{\rm dim}(\mathcal{H}_A)}$. Since this is smaller than $\eta=1$, the corollary follows. See the Appendix~\ref{App:DecAndSSB} for more details.
\end{proof}
\end{corollary}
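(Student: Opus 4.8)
The plan is to reduce the corollary to a single application of Theorem~\ref{theorem: NCunderQD}, whose only hypothesis is $\eta > \hat{P}[(\tilde{E}_k)_k]$. The entire task is thus to identify $\eta$ and $\hat{P}$ for a State Spectrum Broadcasting (SSB) dynamics and to compare them.

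First I would translate the defining features of SSB into the QD$_\eta$ language. Because SSB demands that the spectrum be broadcast \emph{perfectly}, the states carrying each label must be perfectly distinguishable for every Bob, so the guessing probability saturates and $\eta = 1$. Likewise, full decoherence of the central system in a pointer basis $\{\ket{k}\}_k$ of $\mathcal{H}_A$ forces the selected observable to be the projective measurement $\tilde{E}_k = \ket{k}\bra{k}$, whence $\tilde{p}_k = \bra{k}\rho^A\ket{k}$ and, crucially, as $\rho^A$ ranges over $\mathcal{D}(\mathcal{H}_A)$ the distribution $(\tilde{p}_k)_k$ ranges over \emph{all} probability vectors on $k_{\rm max} = \dim(\mathcal{H}_A)$ outcomes. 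I would also note the dimensional prerequisite from the remark after Lemma~\ref{lemma: OperationalSignAI}, namely $\dim(\mathcal{H}_{B_j}) \geq k_{\rm max}$, which is automatic here since perfect distinguishability ($\eta=1$) already requires $k_{\rm max}$ mutually orthogonal states in each $\mathcal{H}_{B_j}$.

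The crux is to evaluate the threshold $\hat{P}[(\ket{k}\bra{k})_k]$ furnished by Lemma~\ref{lemma: OperationalSignAI} and to show it is strictly below $1$. I would argue from the contrapositive built into that lemma: were the $\{\sigma_k^{B_j}\}_k$ affinely \emph{dependent}, a nontrivial relation $\sum_k \lambda_k \sigma_k^{B_j} = 0$ with $\sum_k \lambda_k = 0$ would hold, and splitting the indices by the sign of $\lambda_k$ exhibits two distinct mixtures over disjoint label sets that coincide as density matrices. Such a collision caps how well any POVM can discriminate the encoding states; to obtain $\hat{P}$ one maximises this capped guessing probability over all affinely dependent configurations while minimising over the achievable input distributions $(\tilde{p}_k)_k$ --- a minimisation over all probability vectors on $\dim(\mathcal{H}_A)$ outcomes, since $\tilde{E}_k$ is projective --- and the extremal value works out to $\hat{P}[(\ket{k}\bra{k})_k] = 1 - \frac{1}{2\dim(\mathcal{H}_A)}$. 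For the corollary the only thing that strictly matters is the gap $\hat{P} < 1$; the sharp constant is a bonus of the optimisation.

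With $\eta = 1 > 1 - \frac{1}{2\dim(\mathcal{H}_A)} = \hat{P}$ established, the hypothesis of Theorem~\ref{theorem: NCunderQD} is met, so each Bob's encoding states are affinely independent and, by Lemma~\ref{lemma: AIimplyNC}, each Bob can build a noncontextual ontological model, completing the argument. I expect the middle step to be the main obstacle: extracting the \emph{exact} value of $\hat{P}$ for projective observables requires the full appendix construction, since one must confirm that the worst-case affinely dependent family, optimised against the worst-case input distribution, attains precisely $1 - \frac{1}{2\dim(\mathcal{H}_A)}$ rather than merely some unquantified value below $1$.
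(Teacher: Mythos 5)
Your proposal is correct and takes essentially the same route as the paper's proof: read off $\eta=1$ and $\tilde{E}_k=\ket{k}\bra{k}$ from the definition of State Spectrum Broadcasting, establish $\hat{P}[(\ket{k}\bra{k})_k]=1-\frac{1}{2\dim(\mathcal{H}_A)}<1=\eta$, and conclude via Theorem~\ref{theorem: NCunderQD}. The only cosmetic difference is that you frame $\hat{P}$ as an optimisation over affinely dependent configurations, whereas in the paper it is fixed directly by Definition~\ref{def:newDistBound} and the projective case is a one-line max--min computation (the maximally mixed state attains $\max_{\rho\in\mathcal{S}}\min_k\tilde{p}_k=1/\dim(\mathcal{H}_A)$), so the gap $\hat{P}<1$ you correctly identify as the crux is immediate there.
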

}

\section{Conclusion}

{This paper shows that a particular expression of non-classicality may emerge from quantum Darwinism processes. }{ More precisely, we have proved that non-contextuality is implied for every observer in $S_t$ in quantum Darwinism processes $QD_{\eta}$ with an appropriate distinguishability rate $\eta > \hat{P}$. The cut-off $\hat{P}$ depending only on the monitored POVM $\{\tilde{E}_k\}_k$}. 

This result is {crucially} important{, as} it shows that QD$_\eta$ processes with high $\eta$, besides leading to objectivity { (i,e, high probability of independent observers seeing the same outcome when implementing the adequate measurements)}, also  {implies that} each of them can  {adopt} noncontextual explanations {for their aggregated statistics}. {Moreover, they can construct noncontextual models  with}  {some} aspects {in common: the ontic space $\Lambda$ and distributions $\mu_{P'}$, as defined in Eq.~\eqref{eq: NCOMQD}, are independent of $B_j$.} Such noncontextual explanations may provide a classical view underpinning the probabilities each Bob sees, taking away the need for any nonclassical (in particular, quantum) {description} of the central system. {Interestingly, these conclusions also hold true for State Spectrum Broadcasting, an alternative process for emergence of objectivity in the quantum realm.} 

{ Another important byproduct of our results is the { fact that one can} use $\hat{P}$ as a cut-off for emergence of classical objectivity in QD$_\eta$ processes. This brings a new  perspective
to the Darwinist objectivity proposed by \citet*{BPH_2015_QD}, in which the specifics of the ongoing dynamics set a bound for $\eta$, above which objectivity and noncontextuality have emerged. Indeed, definition~\ref{def:cut-off} suggests that emergence of noncontextuality should also be considered as an important benchmark for emergent classicality within Darwinist processes.}

Remarkably, { even if Eq.\eqref{eq:QDeta} is only obeyed {for} small values of $\eta$, so objectivity under a Darwinist process essentially} does not occur, noncontextuality can still emerge, as long as $\{\sigma^{B_j}_k\}{_k}$ are affinely independent. Indeed, even almost indistinguishable states can be affinely independent; in such cases { it is fair to say that objectivity will not emerge}  whereas the probabilities can still be explained by noncontextual ontological models.
{In this sense}, our {findings} reinforce the idea that the { Environment as a Witness} paradigm alone already represents a step towards {emergent} classicality. One can also see this by using the minimum dimension of Bobs' systems as a figure of merit:  {on} one hand, for noncontextuality to emerge due to an EW$_t$-dynamics together with affine independence of $\{\sigma_k\}{_k}$, one must have ${\rm dim}(\mathcal{H}_{B_j})^2\geq k_{\rm max}$; {on} the other hand, our bound identifying `almost orthogonality' among $\{\sigma^{B_j}_k\}{_k}$ (which implies affine independence), implies ${\rm dim}(\mathcal{H}_{B_j})\geq k_{\rm max}$ (see remark after Lemma \ref{lemma: OperationalSignAI}). Therefore, a portion of the environment that cannot carry as much classical information about $A$ as required by Darwinism (to lead to `enough' objectivity of outcomes), can still lead to noncontextuality. 

Interesting perspectives also arise from this work. {A first question, similar to the problem of how often interactions lead to $\eta\approx1$, is if generic interactions lead to the condition $\eta>\hat{P}$.} Second, it would be interesting to understand if strong quantum Darwinism \cite{Le_StrongQD_2019} -- defined as a set of conditions on the mutual information and discord of system-environment states -- also lead to noncontextuality. Third, it would be interesting to consider what is the impact of some special interactions on this emergence of classicality, such as those leading to non-markovianity, whose impact on Darwinist processes is a matter of debate~\cite{,Galve__NMarkovQD_2016,Sheilla_QDNMarkov_2019}. { Fourth, analyzing how to leverage the results of this work to the generalized probabilistic theories framework would certainly be of interest, as well as possible connections with state discrimination in this framework. Indeed, this could provide a way to understand emergence of noncontextuality due to dynamics without resorting on the quantum formalism.}

{In sum, t}his work is strongly based on the description of the { Environment as a Witness} paradigm and QD process as given by \citet*{BPH_2015_QD} -- which shows the generic impacts of these dynamics. Therefore, our results also point to {the} generic emergence of noncontextuality under the same paradigm, which is {undeniably} a good approximation of our experiences as {classical} observers \cite{Ollivier_2005_EW,zurek_2007_RelativeQDReview}. Thus, we may conclude that our noncontextual experience as observers, naturally embodied in classical physics, may emerge from this indirect interaction with the systems surrounding us. This work also brings new reasons to deem as classical the features emerging under {{QD$_\eta$}} processes even if the POVM pointer $\{\tilde{E}_k\}_k$ is {informationally complete}. 

\begin{acknowledgments}
We thank Renan Cunha for important discussions about relations to minimum-error state discrimination. 
{All authors are grateful for the reviewers for important suggestions that helped improve presentation and deepen the discussions. In particular, for pointing out the possibility of using $\hat{P}$ as  a cut-off.} RDB, RW, BA and MTC are thankful to the Brazilian agencies CNPq and CAPES for financial support. This work is part of the Brazilian National Institute for Science and Technology of
Quantum Information (INCT-IQ). RW also acknowledges financial support by the Portuguese funding institution FCT - Funda\c{c}\~{a}o para Ciência e Tecnologia, via scholarship No. SFRH/BD/151199/2021. RDB acknowledges funding from grant $2016/24162-8$, S\~{a}o Paulo Research Foundation (FAPESP). This research was supported by Grant No. FQXi-RFP-IPW-1905 from the Foundational Questions Institute and Fetzer Franklin Fund, a donor advised fund of Silicon Valley Community Foundation. This work was also supported by the National Research, Development and Innovation Office of Hungary (NKFIH) through the Quantum Information National Laboratory of Hungary and through Grant No. FK 135220.
\end{acknowledgments}

\bibliographystyle{apsrev4-1}
\bibliography{Bib}
\appendix
\section{Formal presentation and discussion of the results of Brand\~{a}o, Piani and Horodecki}
\label{App: BPHandRelax}

\subsection{Brand\~{a}o, Piani and Horodecki's results}
\label{App: BPH}
In this section, for the benefit of the reader, we first present one of Brand\~{a}o, Piani and Horodecki's results in rigorous form to then discuss quickly the consequences of our assumption of infinite subsystems in the environment.\\

Recall the EW$_t$-dynamics: there is an environment with $N$ subsystems $\{B_1,\ldots,B_N\}$, and we want to focus on a portion with $t$ individual subsystems, $B_{S_t}$, of this environment. Theorem $1$ of the main text, proved in reference \cite{BPH_2015_QD} and responsible for objectivity of observables in this dynamics, can be rigorously put  as follows:

\begin{theorem}[Theorem $2$ in ref. \cite{BPH_2015_QD}]
Let $\Phi^{S_t}:\mathcal{D}(\mathcal{H}_A) \to \mathcal{D}(\bigotimes_{j\in S_t}\mathcal{H}_{B_j})$ be an EW$_t$-dynamics, where $S_t \subset \{1,\dots,N\}$. For every $0<\delta<1$ there exists a POVM $\{\Tilde{E}_k\}_k$ such that for more than a $(1-\delta)$ fraction of the subsets $S_t$,
\begin{equation}
    \left\Vert \Phi^{S_t} - \Phi^{S_t}_{obs}\right\Vert_{\diamond}\leq \left(\frac{27 \ln(2)d_A^6\log(d_A)t}{N\delta^3}\right)^{\frac{1}{3}},
\end{equation}
with $d_A \equiv \mathrm{dim}(\mathcal{H}_A)$, and where $\Phi_{obs}^{S_t}$ is a measure-and-prepare map with respect to the family of states $(\sigma_k^{S_t})_k$, meaning that for all $\rho \in \mathcal{D}(\mathcal{H}_A)$, 
\begin{equation}
    \Phi_{obs}^{S_t}(\rho) = \sum_k {\rm Tr}\{\tilde{E}_k\rho\}\sigma_k^{S_t}.
\end{equation}

\end{theorem}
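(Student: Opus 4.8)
The plan is to prove the statement by \emph{vectorising} the channel and then arguing entirely at the level of correlations in the resulting Choi state. Concretely, I would introduce a reference system $R\cong A$ and let the full dynamics $\Phi:\mathcal{D}(\mathcal{H}_A)\to\mathcal{D}(\mathcal{H}_\mathcal{E})$ act on one half of a maximally entangled state $\phi^+_{RA}$, producing a global state $\omega_{RB_1\cdots B_N}=(\mathrm{id}_R\otimes\Phi)(\phi^+_{RA})$. Since the diamond norm between two channels is controlled by the distinguishability of their Choi states up to a factor of $d_A$, the statement reduces to showing that, for most subsets $S_t$, the marginal $\omega_{RB_{S_t}}$ is close to a state of the \emph{classical--quantum} form $\sum_k \Tr[\tilde E_k\,\cdot\,]\,\sigma^{S_t}_k$ on the $R$ side, i.e. that the restricted channel is approximately \emph{entanglement-breaking}, which is exactly a measure-and-prepare map.

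First I would establish the \emph{monogamy} bottleneck. Any genuinely quantum correlation that $R$ shares with one fragment cannot be freely shared with the other fragments, and the total correlation $R$ can have with the whole environment is bounded by $O(\log d_A)$ because $R$ is $d_A$-dimensional. Using a monogamous correlation measure --- squashed entanglement, or the chain-rule bound on conditional mutual information $\sum_{j} I(R:B_j\mid B_1\cdots B_{j-1})=I(R:B_1\cdots B_N)\le 2\log d_A$ --- this forces the \emph{average} fragment to share only $O(\log d_A/N)$ worth of quantum correlation. Passing from single fragments to subsets of size $t$ and applying Markov's inequality then yields that all but a $\delta$-fraction of the subsets $S_t$ carry at most $O\!\left(t\log d_A/(N\delta)\right)$ of this quantity.

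The next step is the \emph{structure/faithfulness} conversion: a small value of the correlation measure must imply genuine closeness to a separable (across the $R\,|\,B_{S_t}$ cut) state. For squashed entanglement this is the faithfulness result of Brand\~{a}o--Christandl--Yard, which bounds the distance to the separable set in terms of the squashed entanglement with explicit dimension factors; a separable Choi state corresponds to an entanglement-breaking, hence measure-and-prepare, channel, furnishing both the POVM $\{\tilde E_k\}_k$ and the states $\sigma^{S_t}_k$. (An alternative route bypasses squashed entanglement by invoking approximate-Markov-chain recovery from small conditional mutual information.) Translating this closeness back through the Choi correspondence into the diamond norm, and tracking the dimension and $\delta$ dependence through the monogamy and faithfulness bounds, is what should produce the exponent $\tfrac13$, the $\delta^{3}$ in the denominator, and the $d_A^{6}\log d_A$ prefactor.

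The step I expect to be the main obstacle is obtaining a \emph{single, fragment-independent} POVM $\{\tilde E_k\}_k$ that simultaneously works for the whole $(1-\delta)$-fraction of good subsets --- precisely the content that makes the result about \emph{objectivity} rather than mere fragment-by-fragment approximability. A naive application of faithfulness yields a POVM that may depend on $S_t$; to fix a universal one I would extract $\{\tilde E_k\}_k$ from a global object insensitive to the choice of fragment (for instance a pretty-good measurement built from the symmetrised/averaged Choi state, or via a de Finetti-type symmetrisation over the fragments), and then verify that this fixed POVM still approximates the measure-and-prepare form for most $S_t$ without degrading the error beyond the stated bound. Reconciling ``most subsets'' with ``one fixed POVM'' while keeping all errors polynomial in $t$, $\log d_A$ and $1/\delta$ is the delicate quantitative heart of the argument.
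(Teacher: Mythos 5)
A point of order first: the manuscript you are being compared against does \emph{not} prove this statement. It is Theorem~2 of Brand\~{a}o, Piani and Horodecki, restated in the appendix explicitly as ``Theorem 2 in ref.~\cite{BPH_2015_QD}'' and described as ``proved in reference \cite{BPH_2015_QD}''; the appendix only uses the quoted bound to justify the infinite-environment idealization ($N\to\infty$ gives $\Phi^{S_t}=\Phi^{S_t}_{obs}$) and, in a separate subsection, to bound a contextuality quantifier for finite $N$. So the only fair comparison is against the original BPH argument. Measured against that, your outline does capture its genuine skeleton: the Choi--Jamio\l{}kowski vectorization with a reference $R$, the monogamy bound $I(R:B_1\cdots B_N)\leq 2\log d_A$ fed through the chain rule for conditional mutual information, a Markov-type counting argument over (blocks of $t$) fragments, and a faithfulness step converting small correlation into closeness to an entanglement-breaking (measure-and-prepare) map, with the $d_A$ powers and the exponent $\tfrac13$ generated by norm conversions and the optimization over the $\delta$-split. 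The original conversion step indeed runs through a Brand\~{a}o--Christandl--Yard-type bound in the one-way LOCC norm, as you guess, not through recovery maps.

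That said, as a proof your proposal has a genuine gap, and it sits exactly where you locate it: producing one $S_t$-independent POVM within the stated error. Neither of your proposed repairs is sound as written. The Fawzi--Renner ``alternative route'' fails structurally: small $I(R:B_{S_t}\mid E)$ yields a recovery map acting on the \emph{conditioning} system, i.e.\ $\Phi^{S_t}\approx \mathcal{R}_{E\to B_{S_t}}\circ \Phi^{E}$, which shows the fragment's output is a post-processing of the rest of the environment but does not make the channel $R\to B_{S_t}$ entanglement-breaking --- approximate recoverability and measure-and-prepare structure are inequivalent, so this route furnishes no POVM at all. The pretty-good-measurement/de Finetti symmetrization idea is only a conjecture in your text: nothing shows the fixed measurement simultaneously approximates all $(1-\delta)N\choose t$-many good subsets without blowing up the error, and this is precisely the quantitative content that distinguishes objectivity from fragment-by-fragment approximability. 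In the original argument the uniform POVM is not patched on afterwards but falls out of the construction along a fixed chain-rule decomposition, so a complete write-up must execute that step (and the constant-tracking that yields $d_A^6\log d_A$ and $\delta^3$) rather than flag it. As it stands, your proposal is a correct program with the theorem's hardest and most characteristic step left open.
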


In the above theorem, $\delta$ quantifies how many fractions of $t$ portions of the environment the approximation will hold. From the upper bound given above, it is possible to see that, if $\delta$ is close enough to $1$, our hypothesis of $S_t$ being one such fraction is not too restrictive. Next, if this value $\delta$ is fixed (or even increase slowly enough with $N$), when we take the limit $N \to \infty$ we have $\Phi^{S_t} = \Phi^{S_t}_{obs}$, as it was considered in the main text. Since environments are usually considered to have a very large quantity of subsystems, this hypothesis is well-motivated. 

\subsection{Relaxing the assumption of infinite environment}
\label{App: Relax}
In this subsection we relax the assumption of infinite-sized environments, allowing for a deviation $\Phi^{B_{S_t}}\neq \Phi^{B_{S_t}}_{\rm obs}$. We show that, as long as we consider a finite set of procedures, outcomes and equivalences, objectivity of observables hardly constrain contextuality, as measured by the quantifier based on the $l_1$-distance \cite{duarte2018resource}.

In what follows, we assume that the number of procedures and outcomes are finite, as well as $\text{Equiv}(\mathcal{P'}), \text{Equiv}(\mathcal{M}^{B_j})$ -- which match the features of any realist experiment.
This means that the effective prepare-and-measure contextuality scenario for each Bob $B_j$ can be defined as the tuple
 $$\mathfrak{S}_j\equiv\left(T_j(\mathcal{P}), \mathcal{M}^{B_j}, \mathcal{O}_{\mathcal{M}^{B_j}},\text{Equiv}(T_j(\mathcal{P})),\text{Equiv}(\mathcal{M}^{B_j})\right)$$ where these sets are as defined in the main text and where the set $\mathcal{O}_{\mathcal{M}^{B_j}}$ is the set of outcome labels of Bob $B_j$. Given that all these sets are finite they define prepare-and-measure scenarios as in ref.~\cite{schmid2018all}. Let $C\left(\mathfrak{S}_j\right)$ represent the set of all possible behaviours, $p \equiv (p(b|M,P'))_{b\in \mathcal{O}_{\mathcal{M}^{B_j}},M\in\mathcal{M}^{B_j},P'\in \mathcal{P}'}$ of $\mathfrak{S}_j$; in the case of finitely many procedures, equivalences and outcomes, this forms a polytope \cite{schmid2018all}. We write the inner set of noncontextual behaviours (also a polytope in the finite scenario) as $NC\left(\mathfrak{S}_j\right)$. We consider the resource theory framework constructed for these scenarios \cite{duarte2018resource,wagner2021using}, in particular, the definition of the $l_1$-contextuality distance.
\begin{equation*}
    \mathbf{d}(p) := \min_{q\in NC\left(\mathfrak{S}_j\right)} \max_{\begin{subarray}{l} M\in \mathcal{M}^{B_j}\\P'\in \mathcal{P}'\end{subarray}} \sum_b \vert p(b|M,P') - q(b|M,P')\vert.
\end{equation*}
For any particular noncontextual behaviour $q_{*}$,

\begin{equation}
    \mathbf{d}(p) \leq \max_{\begin{subarray}{l} M\in \mathcal{M}^{B_j}\\P'\in \mathcal{P}'\end{subarray}} \sum_b \vert p(b|M,P') - q_{*}(b|M,P')\vert.
\end{equation}
Recall that, in particular, when $(\sigma_k)_k^{B_j}$ constitutes a family of affinely independent states we will have that $q_{*} = \left(\Tr\{\Phi^{B_j}_{obs}(\rho)F_b^M\}\right)_{ b\in \mathcal{O}_{\mathcal{M}^{B_j}}, M\in \mathcal{M}^{B_j}, P' \in \mathcal{P}'}$ is a noncontextual behaviour (lemma \ref{lemma: AIimplyNC} the main text ), where we are using POVM notation $\{F_b^M\}_b = M$. Since $\mathcal{P}' = T_j(\mathcal{P})$ we can use the labels for states in $\mathcal{P}$, meaning that Alice prepares quantum realizations as the states $\rho_P$ for any $P \in \mathcal{P}$ that are later transformed before reaching Bob $B_j$. If $p$ is any quantum behaviour in this scenario we must have that

\begin{align*}
    &\mathbf{d}(p) \leq \max_{\begin{subarray}{l} M\in \mathcal{M}^{B_j}\\P\in \mathcal{P}\end{subarray}} \sum_b \left\vert \Tr\{F_b^M \Phi^{B_j}(\rho_P)\} - \Tr\{F_b^M\Phi^{B_j}_{obs}(\rho_P)\} \right\vert \\
    &= \max_{\begin{subarray}{l} M\in \mathcal{M}^{B_j}\\P\in \mathcal{P}\end{subarray}} \sum_b \left\vert \Tr\{F_b^M (\Phi^{B_j}-\Phi^{B_j}_{obs})(\rho_P)\} \right\vert \\
    &=  \max_{\begin{subarray}{l} M\in \mathcal{M}^{B_j}\\P\in \mathcal{P}\end{subarray}}\sum_b \left\vert \Tr\{F_b^M  \mathcal{R}_j(\rho_P)\} \right\vert. 
\end{align*}
Where $\mathcal{R}_j := \Phi^{B_j}-\Phi^{B_j}_{obs}$ for brevity. Notice that $\mathcal{R}_j(\rho_P)$ is a self-adjoint operator for any $\rho_P$ in the domain. To proceed, recall the following  inequality \cite{baumgartner2011inequality},  

\begin{lemma}[Hölder's inequality]
Let $A,B$ be any $n\times n$ complex matrices. Then, 
\begin{equation}
    \left\vert \Tr{A^\dagger B} \right \vert \leq (\Tr{\vert A\vert^l})^{\frac{1}{l}}(\Tr{\vert B\vert^s} )^{\frac{1}{s}}
\end{equation}
such that $1\leq l,s\leq \infty$ with $\frac{1}{l} + \frac{1}{s} = 1$. 
\end{lemma}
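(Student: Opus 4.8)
The plan is to reduce the matrix inequality to the ordinary Hölder inequality for nonnegative sequences, using the singular values as the bridge. Order the singular values of $A$ and $B$ nonincreasingly as $\sigma_1(A)\geq\sigma_2(A)\geq\cdots\geq 0$ and $\sigma_1(B)\geq\sigma_2(B)\geq\cdots\geq 0$. The first ingredient is the elementary identity
\begin{equation*}
    \left(\Tr(|A|^{l})\right)^{1/l}=\Big(\sum_i \sigma_i(A)^{l}\Big)^{1/l},
\end{equation*}
and likewise for $B$ with exponent $s$; this is immediate from the singular value decomposition, since $|A|=(A^{\dagger}A)^{1/2}$ has eigenvalues $\sigma_i(A)$, whence $\Tr(|A|^{l})=\sum_i\sigma_i(A)^{l}$. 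The second, and genuinely substantive, ingredient is von Neumann's trace inequality,
\begin{equation*}
    \left|\Tr(A^{\dagger}B)\right|\leq \sum_i \sigma_i(A)\,\sigma_i(B),
\end{equation*}
which I would invoke as the key lemma.

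Granting these two facts, the argument closes in one line: applying the classical Hölder inequality to the sequences $(\sigma_i(A))_i$ and $(\sigma_i(B))_i$ with conjugate exponents $1/l+1/s=1$ gives
\begin{equation*}
    \sum_i \sigma_i(A)\,\sigma_i(B)\leq\Big(\sum_i\sigma_i(A)^{l}\Big)^{1/l}\Big(\sum_i\sigma_i(B)^{s}\Big)^{1/s},
\end{equation*}
and chaining this with the two displays above yields the claim. The endpoint cases $(l,s)=(1,\infty)$ and $(\infty,1)$ --- which are in fact the ones used later in the paper --- follow from the same scheme, with the sequence-Hölder step replaced by the trivial bound $\sum_i\sigma_i(A)\,\sigma_i(B)\leq\big(\max_i\sigma_i(B)\big)\sum_i\sigma_i(A)$.

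I expect the only real obstacle to be establishing von Neumann's trace inequality itself, as the norm identity and the sequence-Hölder step are routine. I would prove it by the standard unitary-reduction argument: inserting the singular value decompositions $A=U_A\Sigma_A V_A^{\dagger}$ and $B=U_B\Sigma_B V_B^{\dagger}$ reduces the trace to $\Tr(\Sigma_A W \Sigma_B Z)$ with the unitaries $W=U_A^{\dagger}U_B$ and $Z=V_B^{\dagger}V_A$, which expanded in components is bounded by $\sum_{i,j}\sigma_i(A)\,\sigma_j(B)\,|W_{ij}|\,|Z_{ji}|$. The finish is a majorization/Birkhoff argument: the matrix with entries $\tfrac12\big(|W_{ij}|^2+|Z_{ji}|^2\big)$ is doubly stochastic, so this bound is a convex combination of the values $\sum_i\sigma_i(A)\,\sigma_{\pi(i)}(B)$ over permutations $\pi$, and the rearrangement inequality singles out the identity permutation as the maximizer, giving exactly $\sum_i\sigma_i(A)\,\sigma_i(B)$. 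Since all matrices here are finite $n\times n$, the sums are finite and no convergence subtleties arise.
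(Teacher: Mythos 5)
Your proof is correct, but it is worth noting that the paper does not actually prove this lemma at all: it imports the tracial H\"{o}lder inequality wholesale by citation to \cite{baumgartner2011inequality} and immediately applies it. So your blind attempt supplies what the paper leaves to the literature, and it does so along the classical route: reduce $\left\vert \Tr(A^\dagger B)\right\vert$ to the singular-value sequences via von Neumann's trace inequality, then apply scalar H\"{o}lder, using $\Tr(\vert A\vert^l)=\sum_i \sigma_i(A)^l$ to translate back. Each ingredient checks out, including the part you rightly flag as the only substantive one: your sketch of von Neumann's inequality via the two singular value decompositions, the termwise bound $\vert W_{ij}\vert\vert Z_{ji}\vert\leq \tfrac{1}{2}\left(\vert W_{ij}\vert^2+\vert Z_{ji}\vert^2\right)$ (which you use implicitly and should state), the observation that the resulting matrix is doubly stochastic because $W$ and $Z$ are unitary, and the Birkhoff-plus-rearrangement finish are all sound in finite dimension. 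Two small remarks. First, your parenthetical guess about usage is wrong: the paper invokes the lemma with $l=s=2$ (the Cauchy--Schwarz case, to bound $\vert\Tr\{F_b^M\,\mathcal{R}_j(\rho_P)\}\vert$), not the endpoint cases $(l,s)=(1,\infty)$; this does not affect your proof, whose general-exponent form covers the case actually needed. Second, for the endpoints you should say explicitly that $(\Tr\vert B\vert^s)^{1/s}$ at $s=\infty$ is interpreted as the operator norm $\sigma_1(B)=\lim_{s\to\infty}(\Tr\vert B\vert^s)^{1/s}$, which is exactly what your trivial bound $\sum_i\sigma_i(A)\sigma_i(B)\leq\sigma_1(B)\sum_i\sigma_i(A)$ delivers.
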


Here, we fix the notation $\vert A \vert := \sqrt{A^\dagger A}$, while for scalars $x$ the same symbol $|x|$ means the absolute value. From Hölder's inequality we have that,

\begin{equation*}
    \left \vert \Tr\{F^{M}_b \mathcal{R}_j(\rho_P)\} \right \vert \leq \sqrt{\Tr\{\vert \mathcal{R}_j(\rho_P)\vert^2 \}}\sqrt{\Tr\{\vert F_b^{M}\vert^2 \}}
\end{equation*}
and we get,
\begin{align*}
    \mathbf{d}(p) &\leq \max_{\begin{subarray}{l} M\in \mathcal{M}^{B_j}\\P\in \mathcal{P}\end{subarray}}\sum_b \left\vert \Tr\{F_b^M \mathcal{R}_j(\rho_P)\} \right\vert\\
    &\leq\max_{\begin{subarray}{l} M\in \mathcal{M}^{B_j}\\P\in \mathcal{P}\end{subarray}} \sum_b  \sqrt{\Tr\{\vert \mathcal{R}_j(\rho_P)\vert^2 \}}\sqrt{\Tr\{\vert F_b^{M}\vert^2 \}} \\
    &= \max_{M\in \mathcal{M}^{B_j}}\sum_b\sqrt{\Tr\{\vert F_b^{M}\vert^2 \}}\max_{P\in \mathcal{P}} \sqrt{\Tr\{\vert \mathcal{R}_j(\rho_P)\vert^2 \}}\\
    &= C\max_{P\in \mathcal{P}} \sqrt{\Tr\{\vert \mathcal{R}_j(\rho_P)\vert^2 \}}
\end{align*}
where $C:=  \max_{M\in \mathcal{M}^{B_j}}\sum_b\sqrt{\Tr\{\vert F_b^{M}\vert^2 \}}$ is a non-zero bounded constant for any set $\mathcal{M}^{B_j}$. For IC-POVMs $0<C \leq d_{B_j}^3$. Generally speaking $0<C \leq d_{B_j} \# \mathcal{O}_{\mathcal{M}^{B_j}}$, the dimension of Bob's system $d_{B_j}$ and the number of outcomes. We can further improve this inequality with the following lemma,

\begin{lemma}
\label{lemma: desigualdade importante} Let $A \in \mathcal{B}(\mathcal{H})$, with $\dim(\mathcal{H}) < \infty$. Then,  $\sqrt{\Tr{|A|^2}} \leq \Tr{|A|}.$
\end{lemma}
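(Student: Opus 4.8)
The plan is to diagonalise the positive semidefinite operator $|A|=\sqrt{A^{\dagger}A}$ and reduce the claim to an elementary inequality between sums of nonnegative numbers. Since $\dim(\mathcal{H})<\infty$, the spectral theorem provides an orthonormal eigenbasis of $|A|$ with nonnegative eigenvalues $s_1,\ldots,s_n\geq 0$ (the singular values of $A$). In this basis $\Tr|A|=\sum_i s_i$, while $|A|^2=A^{\dagger}A$ has eigenvalues $s_i^2$, so $\Tr|A|^2=\sum_i s_i^2$. The assertion $\sqrt{\Tr|A|^2}\leq \Tr|A|$ therefore becomes the scalar inequality $\sqrt{\sum_i s_i^2}\leq \sum_i s_i$.

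Both quantities being nonnegative, it suffices to compare their squares. I would expand
\begin{equation*}
\Big(\sum_i s_i\Big)^2=\sum_i s_i^2+\sum_{i\neq j}s_i s_j,
\end{equation*}
so that $\big(\sum_i s_i\big)^2-\sum_i s_i^2=\sum_{i\neq j}s_i s_j\geq 0$, since each $s_i\geq 0$ makes every cross term nonnegative. Hence $\sum_i s_i^2\leq \big(\sum_i s_i\big)^2$, and taking square roots yields $\sqrt{\Tr|A|^2}\leq\Tr|A|$, which is the lemma.

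There is no substantive obstacle here: finite-dimensionality guarantees the spectral decomposition of $|A|$ and the finiteness of all the sums, and the only inequality invoked is nonnegativity of the cross terms. Equivalently, the statement is the standard Schatten-norm monotonicity $\Vert A\Vert_2\leq\Vert A\Vert_1$, but the eigenvalue computation above is self-contained and avoids relying on any external norm inequality. The single point one must state with care is the identity $\Tr|A|^2=\sum_i s_i^2$, which holds because the eigenvalues of $|A|^2$ are precisely the squares of those of its positive square root $|A|$.
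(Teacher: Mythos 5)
Your proof is correct and follows essentially the same route as the paper's: diagonalise $|A|$ (the paper works with the spectral decomposition of $G=A^{\dagger}A$) and reduce the claim to the elementary scalar inequality $\sqrt{\sum_i s_i^2}\leq\sum_i s_i$ for nonnegative numbers. The only difference is cosmetic --- you justify that inequality by expanding the square and noting the nonnegative cross terms, while the paper asserts the equivalent subadditivity $\sqrt{\sum\lambda}\leq\sum\sqrt{\lambda}$ directly.
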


\begin{proof} Calling $G = A^\dagger A$  we have $\sqrt{G}^2 = \sum_{\lambda\in \sigma(G)} \sqrt{\lambda}^2P_\lambda = G$, with $\sigma(G)$ the spectrum of $G$ and $P_\lambda$ spectral projectors. Hence, $\sqrt{\Tr{|A|^2}} = \sqrt{\sum_{\lambda \in \sigma(G)} \lambda} \leq \sum_{\lambda \in \sigma(G)}\sqrt{\lambda} = \sum_{\lambda \in \sigma\left(\sqrt{G}\right)}\lambda = \Tr{|A|}$ since $\sqrt{G} = |A|$.
\end{proof}
Using this lemma we show that
\begin{align*}
    \mathbf{d}(p) &\leq C\max_{P\in \mathcal{P}} \sqrt{\Tr\{\vert \mathcal{R}_j(\rho_P)\vert^2 \}}\\
    &\stackrel{\text{Lemma } \ref{lemma: desigualdade importante}}{\leq} C\max_{P\in \mathcal{P}} \Tr\{\vert \mathcal{R}_j(\rho_P)\vert\},
\end{align*}
and therefore
\begin{align*}
   \mathbf{d}(p) &\leq C \max_{\rho \in \mathcal{D}(\mathcal{H}_A)}\Tr\{\vert \mathcal{R}_j(\rho)\vert\}\\
    &= C \max_{\rho \in \mathcal{D}(\mathcal{H}_A)} \left\Vert (\Phi^{B_j}-\Phi^{B_j}_{obs})(\rho)\right\Vert_1 \\
    &= C \max_{\rho \in \mathcal{D}(\mathcal{H}_A)} \left\Vert \frac{1}{d_A}\text{id}_A(\mathbb{1}_A)\otimes  \left((\Phi^{B_j}-\Phi^{B_j}_{obs})(\rho)\right)\right\Vert_1
\end{align*}
where the last equality comes from the fact that the singular values of a matrix that is the tensor product is simply the product of the singular values. Our notation is  $\mathbb{1}_A \in \mathcal{D}(\mathcal{H}_A)$ and $\text{id}_A: \mathcal{D}(\mathcal{H}_A) \to \mathcal{D}(\mathcal{H}_A)$ with $\text{id}_A(\rho) = \rho$. Continuing with the manipulations we finally find, defining $\mathcal{H}_A^{\otimes 2} \equiv \mathcal{H}_A \otimes \mathcal{H}_A$,  

\begin{align*}
    \mathbf{d}(p) &\leq \frac{C}{d_A} \max_{\mathbb{1}_A\otimes \rho \in \mathcal{D}(\mathcal{H}_A^{\otimes 2})} \left\Vert \text{id}_A\otimes  \left(\Phi^{B_j}-\Phi^{B_j}_{obs}\right)(\mathbb{1}_A \otimes \rho)\right\Vert_1 \\
    &\leq \frac{C}{d_A} \max_{\sigma \in \mathcal{D}(\mathcal{H}_A^{\otimes 2})} \left\Vert \text{id}_A\otimes  \left(\Phi^{B_j}-\Phi^{B_j}_{obs}\right)(\sigma )\right\Vert_1 \\
    &= \frac{C}{d_A} \left\Vert \Phi^{B_j}-\Phi^{B_j}_{obs} \right\Vert_{\diamond}.
\end{align*}
The relation found above tells us that contextuality, as quantified by the $l_1$-distance, is bounded by the distance of $\Phi^{B_{S_t}}$ to $\Phi^{B_{S_t}}_{\rm obs}$, as measured by the diamond norm. Now, Brand\~{a}o, Piani and Horodecki's theorem reproduced in the previous subsection tells us that this distance is bounded, showing that the EW$_t$-dynamics bounds contextuality even in the case of finite-sized environment.

As an important instance, let Bob $B_j$ perform IC-POVM's, such that $C\leq d_{B_j}^3$. Our conclusion is that for a sufficiently large number $N$ with respect to $d_{B_j},d_A,t,\delta^{-1}$ the behaviour $p$ will be noncontextual up to precision in the estimate of $\mathbf{d}(p)$ since $\mathbf{d}(p)~\leq~\left(\frac{27 \ln(2)d_A^3d_{B_j}^9\log(d_A)t}{N\delta^3}\right)^{\frac{1}{3}} $. 

\section{Proofs of the results}
\label{App: Proofs}

The first problem to be treated here is: we have a measure-and-prepare map defined by the pairs $(\tilde{E}_k,\sigma_k)_k$ and we wish to detect affine independence of the states $(\sigma_k)_k$ in an operational task. What we do in the next subsection is to consider that, for each initial state of system $A$, $\rho^A\in\mathcal{D}(\mathcal{H}_A)$, we have an instance of a minimum-error state discrimination task. Indeed, each $\rho^A$ leads to an a priori distribution $\tilde{p}_k:={\rm Tr}[\tilde{E}_k\rho^A]$ for the states to be discriminated, $\{\sigma_k\}_k$. This leads us to a bound $\hat{P}$  such that, if $p_{\rm guess}[(\tilde{p}_k,\sigma_k)_k]>\hat{P}$ for all $\rho^A$, the states $\{\sigma_k\}_k$ must be affinely independent. 

\subsection{Proof of Lemma 3}
\label{App:lemma3}

Before proving our results, let us give a simple mathematical argument of why there should be a bound on the guessing probability separating the affinely independent and dependent cases.

Define the matrix $(W)_{ik}:= {\rm Tr}[F_i\sigma_k]$, where $\{F_i\}_i$ is a POVM that best distinguishes among the states $\{\sigma_k\}_k$. If the states are perfectly distinguishable -- in which case they must have disjoint support, thus being affinely independent-- we have $W=\mathds{1}_{k_{\rm max}\times k_{\rm max}}$. This implies ${\rm det}(W)=1$.

However, in the case where $\{\sigma_k\}_k$ is affinely dependent, the matrix $W$ must have a column which is linearly dependent on the others. This implies that ${\rm det}(W)=0$. From the continuity of the determinant, we see that there exists a ball $b$ around  $\mathds{1}$ such that $V\in b\implies {\rm det}(V)>0$, thus the underlying states defining the entries of $V$, $(\sigma_k)_k$, must be affinely independent.

With this general intuition, we follow to the specifics of our proof. First we define our distinguishability bound, $\hat{P}$, able to detect affine independence when $p_{\rm guess}>\hat{P}$.

\begin{definition}[Distinguishability Bound]
\label{def:newDistBound}
Consider a measure-and-prepare channel defined by $(\tilde{E}_k,\sigma_k)_k$, with $\tilde{E}_k\neq0$ for all $k$. Then, there will be states $\rho^A\in\mathcal{D}(\mathcal{H}_A)$ such that ${\rm Tr}[\tilde{E}_k\rho^A]\neq 0$ for all $k$. Denote the set of such states by $\mathcal{S}$. Now, assume (w.l.g.) that ${\rm Tr}[\tilde{E}_1\rho^A]\geq{\rm Tr}[\tilde{E}_2\rho^A]\geq\ldots\geq {\rm Tr}[\tilde{E}_{k_{\rm max}}\rho^A]>0$ (otherwise, relabel $\{\tilde{E}_k\}$ so that it does).  We define the \textbf{distinguishability bound} $\hat{P}$ as

\begin{align}
    \hat{P}[(\tilde{E}_k)_k] &:= \min_{\rho^A\in \mathcal{S}}\left[\sum_{i=1}^{k_{\rm max}-1} {{\rm Tr}[\tilde{E}_k\rho^A]}+\frac{{{\rm Tr}[\tilde{E}_{k_{\rm max}}\rho^A]}}{2}\right]\nonumber\\
    &= 1-\frac{1}{2}\max_{\rho^A\in\mathcal{S}}{\rm Tr}[\tilde{E}_{k_{\rm max}}\rho^A].
    \label{eq: NewDistBound}
\end{align}
\end{definition}
Note that, indeed, $\mathcal{S}$ is not empty: since $\tilde{E}_k\neq0$ for all $k$, at least states of the form $(1-a)\rho^A+a(\mathds{1}/d_A)$, with $a\neq 0$, belong to this set.
In what follows and for simplicity, we will always consider that $(\tilde{p}_k)_k = \left(\Tr[\tilde{E}_k\rho^A]\right)_k$ is non-increasing, so $\tilde{p}_{k_{\rm max}}$ is always the smallest value of the distribution $(\tilde{p}_k)_k$ generated by a state $\rho^A\in\mathcal{S}$, as in definition \ref{def:newDistBound}. There is no loss of generality here, since equivalent proofs can be written by relabelling and/or changing $\tilde{p}_{k_{\rm max}}$ to $\min_k[(\tilde{p}_k)_k]$.   
We will take advantage of the fact that, if $\{\sigma_k\}_k$ are affinely dependent, they \emph{do not} form vertices of a $(k_{\rm max}-1)$-simplex, thus interior points will have non-unique convex decompositions in terms of these states and this will make it impossible to have $p_{\rm guess}>\hat{P}$ for all initial states $\rho^A$. We will actually consider a stronger consequence of affine dependence, captured by Carathéodory's theorem: we can always describe interior points using convex combinations of $\{\sigma_k\}_k$ with some null coefficients (see Fig.~\ref{fig:Caratheodory}).

\begin{theorem}[Carathéodory, adapted from version of ref. \cite{Gallier_UPennNotes}]Given any affine space $E$ of dimension $n$, for any (non-void) family  $f=\{\sigma_k\}_{k=1}^{k_{\rm max}}$ in $E$, the set ${\rm ConvHull}[f]$ is equal to the set of convex combinations of families of $n+1$ points of $f$.
\end{theorem}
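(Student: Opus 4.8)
The plan is to prove the two inclusions separately, disposing of the easy one first. Any convex combination of $n+1$ points drawn from $f$ (allowing fewer, by padding with zero weights) is by definition a convex combination of points of $f$, hence lies in ${\rm ConvHull}[f]$; this settles the inclusion of the candidate set into the convex hull. All the content is in the reverse inclusion: I must show every $x\in{\rm ConvHull}[f]$ admits a representation using at most $n+1$ of the $\sigma_k$.

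For the reverse inclusion I would argue by a minimality-plus-dimension-counting contradiction. Fix $x\in{\rm ConvHull}[f]$ and, among all ways of writing $x=\sum_{k}\lambda_k\sigma_{i_k}$ with $\lambda_k>0$ and $\sum_k\lambda_k=1$, choose one using the least possible number $m$ of points. Suppose for contradiction that $m\geq n+2$. Since $E$ has affine dimension $n$, any collection of $m\geq n+2$ points is affinely dependent, so there exist scalars $\mu_1,\dots,\mu_m$, not all zero, with $\sum_k\mu_k=0$ and $\sum_k\mu_k\sigma_{i_k}=0$, the latter read in the associated vector space, where the constraint $\sum_k\mu_k=0$ makes the expression independent of the chosen origin.

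The key manipulation is then to perturb the weights along this dependence. Because $\sum_k\mu_k\sigma_{i_k}=0$ and $\sum_k\mu_k=0$, for every real $t$ one has $x=\sum_k(\lambda_k-t\mu_k)\sigma_{i_k}$ with weights still summing to one. Since the $\mu_k$ sum to zero and are not all zero, at least one is strictly positive; set $t:=\min_{k:\mu_k>0}\lambda_k/\mu_k$, a strictly positive value attained at some index $k^\ast$. For this choice every weight $\lambda_k-t\mu_k$ is nonnegative (automatic when $\mu_k\le 0$ since then $-t\mu_k\ge 0$, and forced by the minimum when $\mu_k>0$), while the weight at $k^\ast$ is exactly zero. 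Dropping $\sigma_{i_{k^\ast}}$ thus expresses $x$ as a convex combination of only $m-1$ points of $f$, contradicting the minimality of $m$. Hence $m\le n+1$, which is the claim.

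The step I expect to require the most care is the affine-dependence bookkeeping: one must phrase the dependence so that $\sum_k\mu_k=0$, not merely linear dependence in a fixed coordinate system, since this is precisely what guarantees both that the perturbed weights continue to sum to one and that the relation $\sum_k\mu_k\sigma_{i_k}=0$ is meaningful in the affine space independently of an origin. With that correctly set up, choosing $t$ as the minimal positive ratio is the standard ``exit a vertex'' argument, and the nonnegativity check for the surviving weights is routine.
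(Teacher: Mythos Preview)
Your proof is correct and is the standard minimality/weight-perturbation argument for Carath\'eodory's theorem. Note, however, that the paper does not actually prove this statement: it is quoted as a classical result adapted from the cited reference, with no accompanying proof, so there is nothing in the paper to compare against beyond observing that your argument supplies exactly the proof the authors omit.
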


Let us interpret this result in our terms.
The affine space $E:={\rm AffineHull}[\{\sigma_k\}_k]$ will have dimension $n\leq k_{\rm max}-1$, where equality is reached iff $\{\sigma_k\}_k$ is an affinely independent set. Carathéodory's theorem thus says that any point $\sigma\in {\rm ConvHull}[\{\sigma_k\}_{k=1}^{k_{\rm max}}]$ can be written as a convex combination of at most $n+1$ points of $\{\sigma_k\}_k$.

\begin{figure}[hb]
    \centering
    \includegraphics[width=\columnwidth]{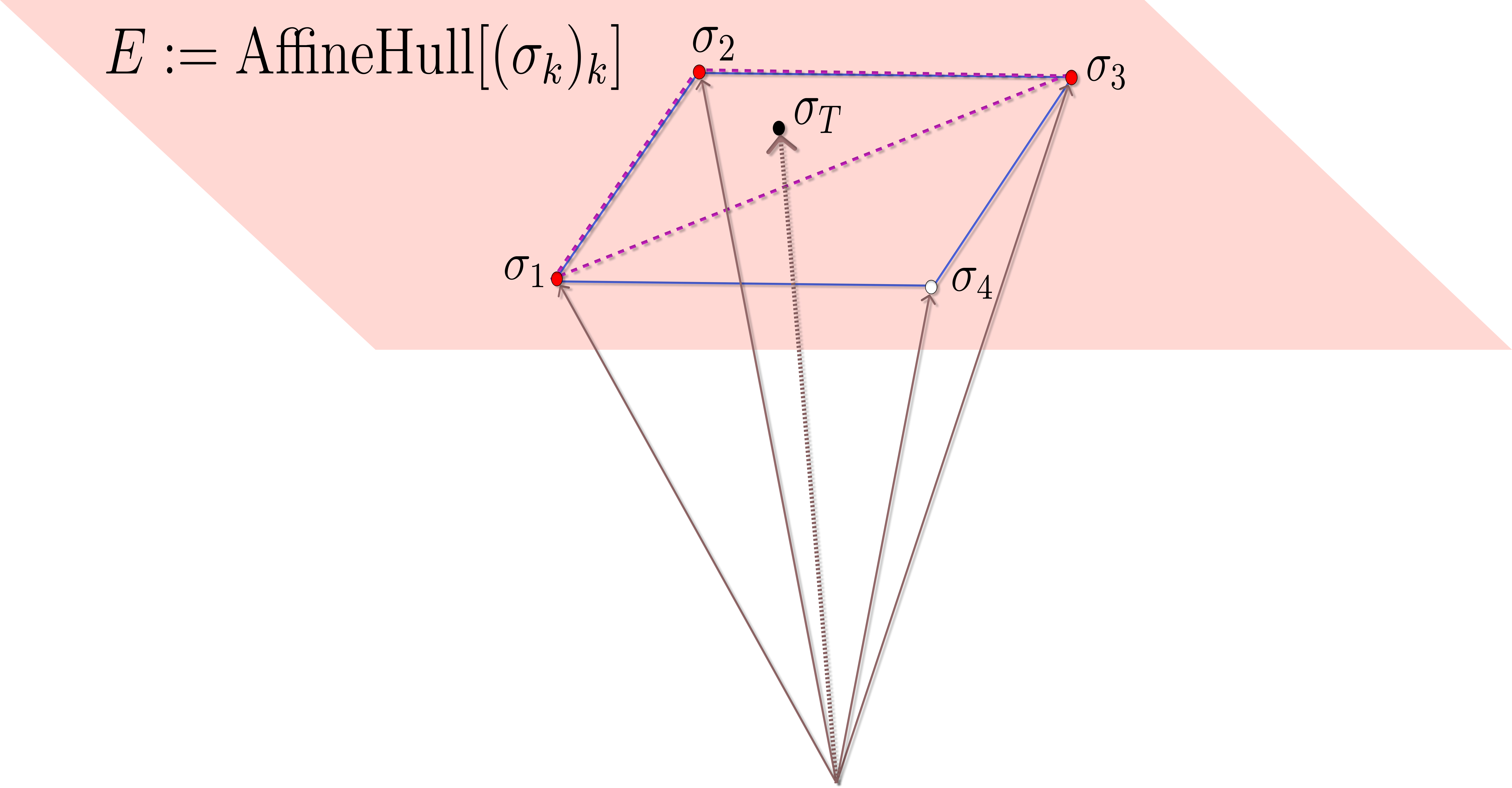}
    \caption{\textbf{Example of Caratheodory's Theorem:} an example of an affinely dependent set with $4$ states ($k_{\rm max}=4$). Their affine hull $E$ forms a plane, thus $\dim(E)=2$, and ${\rm ConvHull}[\{\sigma_k\}_k]$ is a quadrilateral contained in $E$. An interior point, $\sigma_T$, can be written as a convex combination of $2+1=3$ vertices (as shown by the red vertices and the purple dashed triangle). Thus, we can write $\sigma_T=\sum_k q_k \sigma_k$ with $q_4=0$.}
    \label{fig:Caratheodory}
\end{figure}

Now, suppose $\{\sigma_k\}_k$ \emph{does not} form an affinely independent set. Then, the dimension of the affine space $E$ is strictly less than $k_{\rm max}-1$, i.e.  ${\rm dim}(E)\leq k_{\rm max}-2$. From Carathéodory's theorem, any point of ${\rm ConvHull}[\{\sigma_k\}_k]$ can thus be written as a convex combination of $k_{\rm max}-1$ vertices. In particular, consider a density matrix $\sigma_T=\sum_k\tilde{p}_k\sigma_k$  with $\tilde{p}_k>0$ for all $k$; we can be sure that there exists another set of convex coefficients, $\{q_k\}$, such that $q_{k}=0$ for at least one $k$ (there may be more zeros if ${\rm dim}(E)<k_{\rm max}-2$). Formalizing:
\begin{corollary}[]
\label{corollary:qk0}
Consider any state of the form $\sigma_T=\sum_k\tilde{p}_k\sigma_k$ with $\tilde{p}_k>0$ for $k\in\{1,\ldots,k_{\rm max}\}$. Then, if $(\sigma_k)_k$ is an affinely dependent set, there exists a set of convex coefficients $\{q_k\}_k$, with $q_k=0$ for at least one value of $k$, such that $\sigma_T=\sum_k q_k\sigma_k$.
\end{corollary}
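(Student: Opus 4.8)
The plan is to read the statement off directly from Carathéodory's theorem, which we have just stated and interpreted. First I would note that $\sigma_T=\sum_k\tilde{p}_k\sigma_k$, with $\tilde{p}_k>0$ and $\sum_k\tilde{p}_k=1$, is by definition a point of ${\rm ConvHull}[\{\sigma_k\}_k]$. Since $(\sigma_k)_k$ is \emph{affinely dependent}, the affine hull $E:={\rm AffineHull}[\{\sigma_k\}_k]$ has dimension $n=\dim(E)\leq k_{\rm max}-2$. Carathéodory's theorem then guarantees that $\sigma_T$ admits a representation as a convex combination of at most $n+1\leq k_{\rm max}-1$ of the points $\{\sigma_k\}_k$. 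Assigning coefficient zero to every $\sigma_k$ not appearing in this representation yields convex coefficients $\{q_k\}_k$ with $\sigma_T=\sum_k q_k\sigma_k$ and at least $k_{\rm max}-(n+1)\geq 1$ of the $q_k$ equal to zero, which is exactly the claim.

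An alternative, fully self-contained route avoids invoking Carathéodory and instead uses the defining affine-dependence relation: there exist real numbers $c_1,\dots,c_{k_{\rm max}}$, not all zero, with $\sum_k c_k=0$ and $\sum_k c_k\sigma_k=0$. For any real parameter $s$ one then has $\sigma_T=\sum_k(\tilde{p}_k+s\,c_k)\sigma_k$, and because $\sum_k c_k=0$ the perturbed coefficients still sum to one. Since the $c_k$ are not all zero yet sum to zero, at least one is strictly negative; I would then slide $s$ upward from $0$ until the first coefficient reaches the boundary, i.e. take $s^\ast:=\min_{k:\,c_k<0}\tilde{p}_k/(-c_k)$. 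This $s^\ast$ is strictly positive because $\tilde{p}_k>0$ and $-c_k>0$. At $s^\ast$ every coefficient $q_k:=\tilde{p}_k+s^\ast c_k$ is nonnegative (by minimality, $s^\ast(-c_k)\leq\tilde{p}_k$ for all negative $c_k$, while $q_k\geq\tilde{p}_k>0$ for $c_k\geq0$) and they still sum to one, whereas the index achieving the minimum gives $q_k=0$, producing the desired decomposition.

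Both arguments are short, and the only points demanding care are bookkeeping rather than genuine difficulty. In the Carathéodory route the single thing to check is that the ``at most $n+1$ points'' statement is correctly converted into ``at least one zero coefficient,'' which uses precisely the strict bound $n\leq k_{\rm max}-2$ coming from affine dependence. In the explicit route the delicate step is verifying that $s^\ast$ is well defined and positive and that minimising over the negative-coefficient indices keeps all $q_k\geq 0$; this is where the strict positivity of every $\tilde{p}_k$ is used. I would favour the Carathéodory phrasing for continuity with the preceding discussion, noting that it automatically subsumes the degenerate case $\dim(E)<k_{\rm max}-2$, in which more than one coefficient is forced to vanish.
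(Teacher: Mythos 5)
Your first argument is precisely the paper's own proof: the paper invokes Carath\'{e}odory's theorem, notes that affine dependence forces $\dim(E)\leq k_{\rm max}-2$, and concludes that $\sigma_T$ admits a convex decomposition using at most $k_{\rm max}-1$ of the $\sigma_k$, hence with at least one $q_k=0$. Your second, perturbation-based route (sliding $s$ along the affine-dependence relation until the first coefficient $\tilde{p}_k+s^\ast c_k$ vanishes) is also correct, but it is simply a self-contained reproof of the same fact, so there is nothing to add beyond confirming both are sound.
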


The comparison of $\{q_k\}$ and $\{\tilde{p}_k\}$ will have important consequences to us. We might quantify it via `the' statistical distance.

\begin{remark}
\label{remark:DistancePQ}
Assume $\{\sigma_k\}_k$ to be an affinely dependent set. Consider a state $\sigma_T=\sum_k\tilde{p}_k\sigma_k=\sum_k q_k\sigma_k$ with $\tilde{p}_1\geq \tilde{p}_2\ldots\geq \tilde{p}_{k_{\rm max}}>0$ and $q_k=0$ for some $k$. Define the statistical distance between $\{q_k\}_k$ and $\{\tilde{p}_k\}_k$ as
\begin{align}
    D(\{q_k\})=\frac{1}{2}\sum_k |q_k-\tilde{p}_k|.
\end{align}
Then, $D(\{\tilde{p}_k\},\{q_k\})\geq \tilde{p}_{k_{\rm max}}$.
\end{remark}
Let us justify this remark. First, define the set $K:=~\{k|q_{k}\geq \tilde{p}_k\}$ and suppose $q_{k_{\rm max}}=0$. Now, note that we can rewrite the statistical distance:
 \begin{align*}
     D(\{q_k\},\{\tilde{p}_k\}) &= \frac{1}{2}\sum_k |q_k-\tilde{p}_k| \\&=
      \frac{1}{2}\left(\sum_{k\in K}(q_k-\tilde{p}_k) + \sum_{k\notin K}(\tilde{p}_k-q_k)\right)\\
     &=\sum_{k\in K} (q_k-\tilde{p}_k).
  \end{align*}
 The last step uses the fact that $\sum_{k \notin K}q_k =1-\sum_{k\in K}q_k$ and similarly for $\tilde{p}_k$, so we can substitute the second summation by another over $k\in K$ and the last equation follows. Now, we assume $\tilde{p}_{k_{\rm max}}>0$, while $q_{k_{\rm max}}=0$, so $k_{\rm max}\notin K$. Thus,
\begin{align*}
D(\{q_k\},\{\tilde{p}_k\})&=\sum_{k\in K} (q_k-\tilde{p}_k)\\
&\geq \sum_{k\neq k_{\rm max}}(q_k-\tilde{p}_k)\\
&=1-\sum_{k\neq k_{\rm max}}\tilde{p}_k=\tilde{p}_{k_{max}}.
\end{align*}
The inequality comes from the fact that we might have included some values of $k\notin K$, which can only decrease the summation. Now, suppose $q_{k_{\rm max}}\neq 0$; we know by corollary \ref{corollary:qk0} that there will be at least one value of $k$, call it $k^*$, such that $q_{k^*}=0$. Then, by repeating the calculation above we find $D(\{q_k\},\{\tilde{p}_k\})>\tilde{p}_{k^*}\geq \tilde{p}_{k_{\rm max}}$, justifying our remark.

Finally, we have the ingredients to prove our lemma:
\begin{lemma}[Violation of $p_{\rm guess}\leq\hat{P}$ for all $\rho^A$ implies $\{\sigma_k\}_k$ is an affinely independent set]
Consider an EW dynamics defined by the measure-and-prepare channel $(\tilde{E}_k,\sigma_k){_k}$, with {$\tilde{E}_k\neq0$} for all $k$. Consider the associated distinguishability bound $\hat{P}[({\tilde{E}}_k)_k]$ as per definition \ref{def:newDistBound}.
Then, if for all $\rho^A\in\mathcal{D}(\mathcal{H}_A)$ the inequality
\begin{equation}
    p_{\rm guess}[({\rm Tr}\{\tilde{E}_k\rho^A\},\sigma_k)_k] > \hat{P}[({\tilde{E}}_k)_k] 
    \label{ineq: violationBound}
\end{equation}
holds, the states $\{\sigma_k\}_k$ are affinely independent. 
\begin{proof}

Fix $\rho^A=\bar{\rho}$, where $\bar{\rho}$ is a state attaining the minimum in the definition of the bound $\hat{P}$. Since $\bar{\rho}\in \mathcal{S}$, $\tilde{p}_k$ does not increase with $k$ and $\tilde{p}_{k_{\rm max}}>0$. Let us denote the POVM that maximizes the probability of guessing (for the specific $\rho^A\in\mathcal{S}$) as $\{F_k^*\}_k$. Now, if ${\rm Tr}[F^*_b\sigma_b]\leq\frac{1}{2}$ for some value $b$, the bound cannot be violated. Indeed,
\begin{align}
    &p_{\rm guess}[(\tilde{p}_k,\sigma_k)_k] = \sum_k\tilde{p}_k{\rm Tr}[F^*_k\sigma_k]
    \\
    &\leq\sum_{k\neq b}\tilde{p}_k\overbrace{{\rm Tr}[F^*_k\sigma_k]}^{\leq 1} +\frac{\tilde{p}_b}{2}\nonumber\\
    &\leq \sum_{k\neq b}\tilde{p}_k+\frac{\tilde{p}_b}{2}\leq \sum_{k\neq k_{\rm max}}\tilde{p}_k +\frac{\tilde{p}_{k_{\rm max}}}{2}=\hat{P},
\end{align}
(remember that $\tilde{p}_{k_{\rm max}}$ is the minimum of $(\tilde{p}_k)$, hence the last inequality. By considering the appropriate $\{F^*_k\}_k$ for each $\rho^A$, all but the last step above are valid for all $\rho^A\in\mathcal{S}$. The last equality is only valid for states attaining the minimum at the definition of the bound). So, if ${\rm Tr}[F^*_b\sigma_b]\leq 1/2$ for some $b$, we can see that $p_{\rm guess}\leq\hat{P}$ at least for $\bar{\rho}$ as the initial state.
Therefore, one needs ${\rm Tr}[E^*_k\sigma_k]~>~1/2$ for all $k$ so that $p_{\rm guess}>\hat{P}$ might be possible for all $\rho^A$. 

Now, we assume that $\{\sigma_k\}_k$ is an affinely dependent set, which leads to a contradiction with Eq.~\eqref{ineq: violationBound}. We still consider $\rho^A=\bar{\rho}\in\mathcal{S}$ and show that, if the effective average state in Bob's side ($\sigma_T=\sum_k\tilde{p}_k\sigma_k$ with $\tilde{p}_k>0$ for all $k$), admits a convex decomposition $\sigma_T=\sum_kq_k\sigma_k$ with $q_{k}=0$ for some $k$, $p_{\rm guess}<\hat{P}$ must be obeyed for $\bar{\rho}$.

The probability of error in the discrimination is defined by $P_{\rm err}=\sum_k\sum_{b\neq k}\tilde{p}_b{\rm Tr}[F^*_k\sigma_b]$. Since $\sum_{b\neq k}\tilde{p}_b\sigma_b=\sigma_T-\tilde{p}_k\sigma_k$, we can rewrite $P_{\rm err}$ as:
\begin{align}
    P_{\rm err}&=\sum_k\overbrace{{\rm Tr}[F^*_k(\sigma_T-\tilde{p}_k\sigma_k)]}^{\geq 0}\\
    &\geq\sum_{k\in K}{\rm Tr}[F^*_k(\sigma_T-\tilde{p}_k\sigma_k)]\nonumber \\
    &= \sum_{k\in K}{\rm Tr}[F^*_k(\sum_iq_i\sigma_i-\tilde{p}_k\sigma_k)]\\
    &\geq\sum_{k\in K}{\rm Tr}[F^*_k(q_k\sigma_k-\tilde{p}_k\sigma_k)].
\end{align}
In the first inequality we used the fact that the summand is positive, so restricting $k\in K$ (recall, $K:=~\{k|q_k\geq \tilde{p}_k\}$) can only decrease the summation. After substituting $\sigma_T=\sum_i q_i\sigma_i$ (second equation), we used the fact that ${\rm Tr}[F q_k\sigma_k]\leq {\rm Tr}[F \sigma_T]$ for any POVM element $F$. Therefore, we get
\begin{align}
    P_{\rm err} &\geq \sum_{k\in K}(q_k-\tilde{p}_k)\overbrace{{\rm Tr}[F^*_k\sigma_k]}^{>\frac{1}{2}}
    > \frac{1}{2}\sum_{k\in K}(q_k-\tilde{p}_k)\nonumber\\
    &=\frac{1}{2}D(\{\tilde{p}_k\},\{q_k\})\geq\frac{{p}_{k_{\rm max}}}{2},
\end{align}
where we used remark \ref{remark:DistancePQ} to replace $D(\{q_k\},\{\tilde{p}_k\})$

We can thus conclude that, if $(\sigma_k)_k$ is an affinely dependent set, $P_{\rm err}> \tilde{p}_{k_{\rm max}}/2$ for states attaining the minimum on the definition of the bound. Equivalently, $p_{\rm guess}< \sum_{k\neq k_{\rm max}}\tilde{p}_k+\tilde{p}_{k_{\rm max}}/2$ for such states, thus being impossible to obey Ineq.~\eqref{ineq: violationBound}. By contradiction, we arrive to the claim. (Note that $p_{\rm guess}$ is not bounded by the condition of affine independence, since it can get as high as $1$, when $\{\sigma_k\}_k$ are perfectly distinguishable.) 
\end{proof}
\end{lemma}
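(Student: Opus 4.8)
The plan is to argue by contraposition: rather than showing directly that a uniformly high guessing probability forces affine independence, I would assume $\{\sigma_k\}_k$ is affinely \emph{dependent} and then exhibit a single central-system state $\rho^A$ at which $p_{\rm guess}\le\hat P$, contradicting the hypothesis that \eqref{ineq: violationBound} holds for every $\rho^A$. The intuition for why a strictly sub-unit threshold $\hat P$ exists at all is a continuity/determinant argument (as sketched before Definition~\ref{def:newDistBound}): perfect distinguishability makes the discrimination matrix $({\rm Tr}[F_i\sigma_k])_{ik}$ the identity, of unit determinant, whereas affine dependence forces it to have a dependent column and hence zero determinant, so an open neighbourhood of the ideal case still certifies independence. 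The natural test state is the minimizer $\bar{\rho}\in\mathcal{S}$ of Definition~\ref{def:newDistBound}, both because $\hat P=1-\tfrac12\,{\rm Tr}[\tilde E_{k_{\rm max}}\bar{\rho}]$ is calibrated precisely to it and because $\bar{\rho}\in\mathcal{S}$ guarantees $\tilde p_k>0$ for all $k$, so the entire family $\{\sigma_k\}_k$ genuinely enters the discrimination.

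First I would set up minimum-error discrimination at $\bar{\rho}$: write $p_{\rm guess}=\sum_k\tilde p_k\,{\rm Tr}[F_k^*\sigma_k]$ for the optimal POVM $\{F_k^*\}_k$ and $P_{\rm err}=1-p_{\rm guess}$. Using completeness $\sum_kF_k^*=\mathds{1}$ and the average state $\sigma_T=\sum_k\tilde p_k\sigma_k$, I would recast the error as $P_{\rm err}=\sum_k{\rm Tr}[F_k^*(\sigma_T-\tilde p_k\sigma_k)]$, where each summand is nonnegative since $\sigma_T-\tilde p_k\sigma_k=\sum_{b\ne k}\tilde p_b\sigma_b\succeq0$. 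Before proceeding I would record an elementary gating observation: if ${\rm Tr}[F_b^*\sigma_b]\le\tfrac12$ for even one label $b$, then bounding the $b$-th term by $\tfrac{\tilde p_b}{2}$ and every other by $\tilde p_k$, and using that $\tilde p_{k_{\rm max}}$ is the smallest weight, already yields $p_{\rm guess}\le\hat P$ at $\bar{\rho}$. Hence to have any hope of violating the bound one must have ${\rm Tr}[F_k^*\sigma_k]>\tfrac12$ for every $k$, a fact I will exploit.

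Next I would invoke affine dependence. By Carathéodory's theorem and Corollary~\ref{corollary:qk0}, the affine hull of $\{\sigma_k\}_k$ has dimension at most $k_{\rm max}-2$, so $\sigma_T$ (which has strictly positive weights) admits a second convex decomposition $\sigma_T=\sum_k q_k\sigma_k$ with $q_k=0$ for at least one $k$. Restricting the nonnegative error sum to $K=\{k:q_k\ge\tilde p_k\}$, substituting $\sigma_T=\sum_iq_i\sigma_i$, and using ${\rm Tr}[F_k^* q_k\sigma_k]\le{\rm Tr}[F_k^*\sigma_T]$ for each $k$, I would reach $P_{\rm err}\ge\sum_{k\in K}(q_k-\tilde p_k)\,{\rm Tr}[F_k^*\sigma_k]$. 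The threshold ${\rm Tr}[F_k^*\sigma_k]>\tfrac12$ together with $q_k-\tilde p_k\ge0$ on $K$ then gives $P_{\rm err}>\tfrac12\sum_{k\in K}(q_k-\tilde p_k)=\tfrac12\,D(\{q_k\},\{\tilde p_k\})$, and Remark~\ref{remark:DistancePQ} supplies $D\ge\tilde p_{k_{\rm max}}$. Thus $P_{\rm err}>\tfrac12\tilde p_{k_{\rm max}}$, i.e. $p_{\rm guess}<1-\tfrac12\tilde p_{k_{\rm max}}=\hat P$ at $\bar{\rho}$, contradicting \eqref{ineq: violationBound}; the lemma follows by contraposition.

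The step I expect to be the main obstacle is the chain of operator manipulations that converts the error functional into the classical statistical distance $D(\{q_k\},\{\tilde p_k\})$ while keeping every inequality pointing the same way: one must justify both $\sigma_T-\tilde p_k\sigma_k\succeq0$ and $\sigma_T-q_k\sigma_k\succeq0$ as genuine operator-positivity statements, restrict to $K$ in the direction that only decreases the sum, and downgrade the POVM matrix elements using the $\tfrac12$ gate without accidentally reversing a bound. A secondary but real subtlety is the bookkeeping of Definition~\ref{def:newDistBound} — the ordering $\tilde p_1\ge\cdots\ge\tilde p_{k_{\rm max}}$, the restriction to $\mathcal{S}$, and the identification of the minimizer $\bar{\rho}$ — which must be threaded consistently so that the final comparison is tight against $\hat P$ rather than against some other weight. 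I would also keep in mind the sanity check, noted in the surrounding discussion, that affine independence places no \emph{upper} bound on $p_{\rm guess}$, so the argument only needs the one-directional threshold it produces.
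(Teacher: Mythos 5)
Your proposal is correct and follows essentially the same route as the paper's own proof: fixing the minimizer $\bar{\rho}\in\mathcal{S}$, the $\tfrac12$-gate on ${\rm Tr}[F_k^*\sigma_k]$, the rewriting $P_{\rm err}=\sum_k{\rm Tr}[F_k^*(\sigma_T-\tilde{p}_k\sigma_k)]$, the Carath\'eodory-based second decomposition with some $q_k=0$, and the reduction to the statistical distance bound $D\geq\tilde{p}_{k_{\rm max}}$ are all exactly the paper's steps. Even the anticipated subtleties you flag (the operator positivity $\sigma_T-q_k\sigma_k\succeq 0$ and the bookkeeping of the ordering and of $\mathcal{S}$) are handled in the paper precisely as you describe, so nothing further is needed.
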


The above proof may be of interest for minimum-error state discrimination: as mentioned above, for each $\rho^A$, we have an instance of this task. Moreover, the attention we pay to states in the set $\mathcal{S}$ is very natural in minimum-error state discrimination~\cite{Bergou_StateDiscr_2004}. Indeed, in those tasks one is interested in distinguishing states $\{\sigma_k\}$, chosen according to a fixed prior $(\tilde{p}_k)$ where, usually, $\tilde{p}_k>0$ for all $k$ is assumed. Otherwise, one could just consider a smaller subset of the states $\{\sigma_k\}$ which have non-zero chance of being selected. 

Adapting to this task, our proof shows that, if $p_{\rm guess}>1-\min_k \tilde{p}_k/2$, the states $\{\sigma_k\}_k$ must be affinely independent, thus forming a $(k_{\rm max}-1)$-simplex in $\mathcal{D}(\mathcal{H})$. In other words, by observing a high value of $p_{\rm guess}$, we get an information on the geometrical disposition of the states $\{\sigma_k\}_k$.

{
\subsection{Perfect decoherence and State Spectrum Broadcasting cases}
\label{App:DecAndSSB}
\subsubsection{Dynamics with perfect decoherence}
\label{App:perfectDec}
}
Let us look into how our bound behaves in a simple and important example of an EW$_t$-dynamics: the case in which $\{\tilde{E}_k\}$ is a rank$-1$ projective pointer observable, projecting onto a basis of $\mathcal{H}_A$. This is the case of a perfect decoherence process, which is important for ideal cases of quantum Darwinism \cite{zurek_2007_RelativeQDReview}, thus forming a milestone of such processes.

In this case, as mentioned above, $\tilde{E}_k=\ket{k}\bra{k}$, and information regarding projection on this basis $\{\ket{k}\bra{k}\}_k$ may be available to the environmental states $\sigma^{B_{S_t}}$.  Initial states $\rho^A$ having the same population on this basis and differing only on the phases will lead to the same distribution ${\rm Tr}[\tilde{E}_k\rho^A]={\rm Tr}[\ket{k}\bra{k}\rho^A]$ -- as mandated by an ideal decoherence process.  Therefore, the ontological model constructed in the main text (Eq.s $8$) encapsulated this idea: $\mu_{P'}$ is the same for all initial states that decohere to the same final state. What would  our bound $\hat{P}$ be?

To answer this we need to look at the set of states $\mathcal{S}$ in definition \ref{def:newDistBound}, which contains the states having non-zero component on all of the pointer basis states. In other words, the states in $\mathcal{S}$ must obey $\tilde{p}_k=\Tr[\ket{k}\bra{k}\rho]>0$ for all $k$. Among those, we need to find the one with the highest possible value for the smaller $\tilde{p}_k$. It is possible to see that such state is the maximally mixed state: ${\rm Tr}[\ket{k}\bra{k}\rho^A]=1/{{\rm dim}(\mathcal{H}_A)}$ for all $k$. Then, in the case of perfect decoherence with a pointer \emph{basis} being selected, if
\begin{align}
    p_{\rm guess}[({\rm Tr}\{\ket{k}\bra{k}\rho^A\},\sigma_k)_k]>1-\frac{1}{2{\rm dim}(\mathcal{H}_A)}\,\,\forall \rho^A,
\end{align}
we are sure that $\{\sigma_k\}_k$ are affinely independent states. 

As we can see, for complete decoherence of systems with different dimensions, we have a different bound $\hat{P}$ -- the higher the dimension, the higher is $\hat{P}$.
This shows that choosing an arbitrary bound $\bar{\eta}$ to QD$_\eta$ processes as a threshold for objectivity can lead to problems, even though it might look `close enough to $1$'. Indeed, suppose one fixes $\bar{\eta}<1$. Now, for complete decoherence of a system with dimension ${\rm dim}(\mathcal{H}_A)>\frac{1}{2(1-\bar{\eta})}$ the bound $\hat{P}$ obeys $\hat{P}>\bar{\eta}$.

Another interesting consequence of this particular case is that we can use it to show the emergence of noncontextuality in an alternative process for objectivity in the quantum realm, namely, State Spectrum Broadcasting \cite{Horodecki_SSB_2015}.
{
\subsubsection{State Spectrum Broadcasting}
}
The process of state spectrum broadcasting was first proposed in ref. \cite{Horodecki_SSB_2015}, imposing a specific form for the final state of the central system and a fragment of its environment as the reason for objectivity. It was suggested as a necessary and sufficient condition for emergence of objectivity in the quantum realm. Later, however, it was proven to be too restrictive, providing sufficient, but not necessary, conditions \cite{Le_StrongQD_2019}. Here we adapt the Brand\~{a}o, Piani and Horodecki's approach to deal with State Spectrum Broadcasting and show that noncontextuality also emerges under this process.

Consider the whole system-environment dynamics $\Phi^{A\mathcal{E}}$ as a cptp map $\mathcal{D}(\mathcal{H}_A)\mapsto \mathcal{D}(\mathcal{H}_{A\mathcal{E}})$. Note this is similar to the $\Phi$ map we consider in definition \ref{def:EWdynamics}, but with the difference that we are not discarding the central system's state after the interaction. Now, as in the case of EW$_t$ dynamics, let us discard all of the environment but a small fragment with $r$ subsystems \footnote{The use of the label $r$ for the number of subsystems, instead of $t$, shall be clear soon.}, $B_{S_r}$, i.e. $\Phi^{AB_{S_r}}:={\rm Tr_{\mathcal{E}\backslash B_{S_r}}}\circ\Phi^{A\mathcal{E}}$. Again, the essential difference between $\Phi^{AB_{S_r}}$ and $\Phi^{B_{S_t}}$ is the explicit presence of $A$. Depending on the final state induced by the $\Phi^{AB_{S_r}}$ dynamics, we may arrive at a State Spectrum Broadcasting process.

\begin{definition}[State Spectrum Broadcasting]
\label{def: SSB} Consider the map $\Phi^{A\mathcal{E}}$ describing the interaction of system $A$ and its environment $\mathcal{E}$, composed of $N$ subsystems. Now, consider the dynamics $\Phi^{AB_{S_r}}$, obtained by tracing out the whole environment but $r$ subsystems denoted by $B_{S_r}$.
A State Spectrum Broadcasting process has occurred if the final joint state of $A$ and $B_{S_r}$, $\rho^{AB_{S_r}}:=\Phi^{AB_{S_r}}(\rho^A)$, takes the following form:
\begin{align}
    \rho^{AB_{S_r}}=\sum \tilde{p}_k \ket{k}\bra{k}\otimes \bigotimes_{{j\in S_t}}\sigma^{B_j}_k,
    \label{eq: SSBstructure}
\end{align}
being $\tilde{p}_k:={\rm Tr}[\ket{k}\bra{k}\rho^A]$ the probability distribution arising due to perfect decoherence, and all $\{\sigma^{B_j}_k\}_k$ having disjoint supports, i.e. $\sigma^{B_j}_k\sigma^{B_j}_{k'}=0$ whenever $k\neq k'$. 
\end{definition}
In the above definition, the name of the process becomes clear: the spectrum of the  state of the system after decoherence (with respect to a certain pointer basis), $(\tilde{p}_k)_k$, is perfectly broadcast to the environment. The proposal that such structure for the joint state $\rho^{AB_{S_r}}$ provides sufficient conditions for objectivity comes from the following: first, the system undergoes a complete decoherence process, having no decoherence-free subspace left. This means it can be considered as a classical mixture of orthogonal states. Secondly, the condition of disjoint supports for $\{\sigma^{B_j}_k\}_k$ for every $j\in B_{S_r}$ means that the distribution $\tilde{p}_k$ is perfectly broadcast to the environmental subsystems $B_j\in B_{S_r}$. That is, this condition corresponds to the $\eta=1$ case for quantum Darwinism! Thus, if one accepts a condition $\eta\approx 1$ leading to objectivity under Brand\~{a}o, Piani and Horodecki's approach to quantum Darwinism, one is forced to accept State Spectrum Broadcasting as reaching objectivity as well. Finally, the additional presence of the (completely decohered) state of the central system ensures that one can directly probe it, thus allowing for confirmation of the information broadcast to the environment (as long as one probe it using the $\{\ket{k}\bra{k}\}_k$ measurement).

Note that we can describe a State Spectrum Broadcasting process as a special and more restrictive case of Darwinism in the Brand\~{a}o, Piani and Horodecki formalism. All we need to do is to adapt our interpretation, by considering a post-interaction state of $A$ in addition to an environmental portion, $B_{S_r}$. In other words, we can choose a special $B_{S_t}$ portion in our definition of Darwinist process: $B_{S_t}= A\cup B_{S_r}$, where $B_{S_t}$ now also contains the final state of $A$. This brings no problem, as there is no need to demand all of $B_j$ \emph{to be different from $A$}.  Indeed, all that is required by the EW$_t$-dynamics is that $\Phi$ maps $\mathcal{D}(\mathcal{H}_A)$ to $\mathcal{D}(\mathcal{H}_{B_1}\otimes\ldots\otimes\mathcal{H}_{B_N})$, with $N$ large, and we should trace out all but the $B_{S_t}$ subsystems in the end -- and one of the $B_j\in B_{S_t}$ can certainly represent the state of $A$ after the interaction\footnote{Actually, this is explicitly mentioned in ref. \cite{BPH_2015_QD}.}. In the main text, we considered $B_{S_t}$ as composed of only environmental subsystems for the sake of simplicity.

Therefore, we can see State Spectrum Broadcasting as a special kind of QD$_{\eta}$ process, in which: $i)$ $\eta=1$, $ii)$ perfect decoherence has occurred ($\tilde{E}_k=\ket{k}\bra{k}$ for some pointer basis $\ket{k}$ of the Hilbert space $\mathcal{H}_A$) and $iii)$ $A\in B_{S_t}$ and $\sigma_k^A=\ket{k}\bra{k}$. Indeed, the state in Eq.~\eqref{eq: SSBstructure} is already in a measure-and-prepare form,  with objective observable $\tilde{E}_k=\ket{k}\bra{k}$ and conditionally prepared states $\ket{k}\bra{k}\otimes\bigotimes_{j\in S_t}\sigma_k^{B_j}$. Those states \emph{have disjoint support}, thus being perfectly distinguishable, i.e. $\eta=1$. From the example of perfect decoherence, we know that $\hat{P}[(\ket{k}\bra{k})_k]=1-1/2{\rm dim}(\mathcal{H}_A)$, which means $\eta(=1)>~\hat{P}$. Since we are dealing with a specific form of a QD$_\eta$ process, we can apply our Theorem \ref{theorem: NCunderQD} and conclude that, under a process of State Spectrum Broadcasting, noncontextuality emerges. Formally:

\begin{corollary*}[State Spectrum Broadcasting -- Restatement]
If the interaction leads to the occurrence of State Spectrum Broadcasting process for arbitrary initial states $\rho^A$, all the Bobs can construct a noncontextual ontological model to their statistics.
\begin{proof}
As discussed above, we have $\eta=1$ and $\tilde{E}_k=\ket{k}\bra{k}$ for all $k$, with $\ket{k}\bra{k}$ a basis of $\mathcal{H}_A$. This last condition implies $\hat{P}=1-\frac{1}{2{\rm dim (\mathcal{H}_A)}}$, as seen in sec. \ref{App:perfectDec}. If the interaction is such that a State Spectrum Broadcasting process occurs regardless of the initial state of the system, we have a QD$_{\eta}$ process satisfying
$\eta>\hat{P}$, and, by Theorem~\ref{theorem: NCunderQD}, all of the Bobs $B_j\in B_{S_t}$ can construct a noncontextual ontological model for their statistics. Interestingly, one of the Bobs will be receiving the central system $A$ after the interaction, in state $\sigma^A=\sum \tilde{p}_k\ket{k}\bra{k}$. The fact that this is a state of $A$ and not an environmental subsystem makes no difference to our results.
\end{proof}
\end{corollary*}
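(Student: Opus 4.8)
The plan is to recognise that a State Spectrum Broadcasting process is merely a particularly structured $\mathrm{QD}_\eta$ process, so that the main result (Theorem~\ref{theorem: NCunderQD}) applies directly. First I would unpack Definition~\ref{def: SSB}: the post-interaction joint state $\rho^{AB_{S_r}}=\sum_k\tilde{p}_k\ket{k}\bra{k}\otimes\bigotimes_{j\in S_t}\sigma^{B_j}_k$ is already in measure-and-prepare form, with pointer observable $\tilde{E}_k=\ket{k}\bra{k}$ projecting onto the decoherence basis and conditional states $\ket{k}\bra{k}\otimes\bigotimes_j\sigma^{B_j}_k$. The crucial reinterpretation is to absorb the decohered central system into the list of ``Bobs'', i.e.\ to set $B_{S_t}=A\cup B_{S_r}$ with $\sigma^A_k=\ket{k}\bra{k}$. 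I would justify this by noting that the EW$_t$-dynamics framework only asks that $\Phi$ map $\mathcal{D}(\mathcal{H}_A)$ into a multipartite output and that all but the retained fragments be traced out; nothing prevents one retained fragment from being the central system itself.

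Next I would check the two quantitative ingredients. The disjoint-support hypothesis $\sigma^{B_j}_k\sigma^{B_j}_{k'}=0$ for $k\neq k'$ makes each family $\{\sigma^{B_j}_k\}_k$ perfectly distinguishable by the projective measurement onto their supports, so $p_{\rm guess}=1$ for every $\rho^A$ and every retained subsystem; hence $\eta=1$. For the cut-off I would simply quote the perfect-decoherence computation of Sec.~\ref{App:perfectDec}: when $\tilde{E}_k=\ket{k}\bra{k}$ is a rank-one projective basis measurement, the minimisation in Definition~\ref{def:newDistBound} is attained at the maximally mixed state, yielding $\hat{P}[(\ket{k}\bra{k})_k]=1-\frac{1}{2\,{\rm dim}(\mathcal{H}_A)}$.

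With these facts the argument closes at once: since $\eta=1>1-\frac{1}{2\,{\rm dim}(\mathcal{H}_A)}=\hat{P}$, the hypothesis $\eta>\hat{P}$ of Theorem~\ref{theorem: NCunderQD} holds, so every Bob---including the one holding the decohered copy of $A$, in state $\sigma^A=\sum_k\tilde{p}_k\ket{k}\bra{k}$---can build a noncontextual ontological model of the form in Eq.~\eqref{eq: NCOMQD} for their local statistics. I would close by remarking that the presence of $A$ among the retained subsystems is immaterial to the conclusion.

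I expect the one genuine subtlety---as opposed to the arithmetic---to be the framing step: convincingly arguing that SSB really is an instance of the $\mathrm{QD}_\eta$ formalism despite the explicit retention of $A$, and that treating $A$ as a ``Bob'' with trivial encoding $\sigma^A_k=\ket{k}\bra{k}$ violates none of the standing assumptions (in particular non-nullity of the $\tilde{E}_k$ and the measure-and-prepare structure). Once that identification is secured, invoking the main theorem is routine.
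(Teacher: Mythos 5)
Your proposal is correct and follows essentially the same route as the paper's own proof: viewing SSB as a QD$_\eta$ process with $\eta=1$ by absorbing the decohered central system into $B_{S_t}$ with $\sigma^A_k=\ket{k}\bra{k}$, quoting the perfect-decoherence bound $\hat{P}=1-\frac{1}{2\,{\rm dim}(\mathcal{H}_A)}$ from the appendix, and invoking Theorem~\ref{theorem: NCunderQD}. Even the subtlety you flag---that nothing in the EW$_t$-dynamics forbids one retained fragment from being $A$ itself---is exactly the justification the paper gives.
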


{
\section{Connection to the characterization of noncontextuality in the generalized probabilistic theories framework}
\label{app:NCGPT}
}
 In Lemma ~\ref{lemma: AIimplyNC}, we have showed that noncontextuality emerges for those Bob in which the encoding states $\{\sigma^{B_j}_k\}_k$ are affinely independent. The condition of affine independence of these states can be linked to the characterization of noncontextuality in  the generalized probabilistic theories (GPT) framework, as formalized in references \cite{shahandeh2021contextuality,schmid2021characterization}. For the benefit of the readers familiar with such characterization, in this section we  concisely expose this connection for the case $k_{\rm max}={\rm dim}(\mathcal{H}_{B_j})^2$. In what follows, we use the notation $n:={\rm dim}(\mathcal{H}_{B_j})$.

As mentioned in Lemma \ref{lemma: AIimplyNC}'s proof, affine independence of $\{\sigma^{B_j}_k\}_k$ implies that ${\rm ConvHull}[\{\sigma^{B_j}_k\}_k]$ is a $(k_{\rm max}-1)$ simplex, denoted by $\triangle_{k_{\rm max}}$. Thus, the effective states arriving to Bob $B_j$, $\sum \tilde{p}_k\sigma^{B_j}_k$, can be identified with points on that simplex. In the case we focus here, i.e. $k_{\rm max}=n^2$, this simplex is written $\triangle_{n^2}$. Since these are  quantum states on Bob's subsystem, $\triangle_{n^2}\subset \mathcal{D}(\mathcal{H}_{B_j})$. 

The states $\mathcal{D}(\mathcal{H}_{B_j})$, as well as the restricted $\triangle_{n^2}$, live in (a hiperplane of) the real vector space of $n\times n$ Hermitian matrices, $\mathbb{H}_n$. The set of elements of POVM, which will assign the probabilities to each outcome of all possible measurements that Bob $B_j$ can perform, can be identified with the compact convex set  ${\mathcal{D}^*(\mathcal{H}_{B_j}):=~\{F_b\in~ \mathbb{H}_n| {\rm Tr}\{F_b \rho\}\in[0,1]\,\,\forall \rho\in\mathcal{D}(\mathcal{H}_{B_j})\}}$. We also consider the analogously defined set of Hermitian matrices whose inner product with elements of \emph{the simplex} is  between $0$ and $1$, i.e. ${\triangle_{n^2}^*:=\{F'_b\in \mathbb{H}_n| {\rm Tr}\{F'_b \rho\}\in[0,1]\forall \rho\in\triangle_{n^2}\}}$.  Since $\triangle_{n^2} \subset \mathcal{D}(\mathcal{H}_{B_j})$, the `dual' set $\triangle_{n^2}^*$ must be larger than $\mathcal{D}^*(\mathcal{H}_{B_j})$, i.e. $\mathcal{D}^*(\mathcal{H}_{B_j})\subset \triangle_{n^2}^*$. 

Now, consider a generalized probabilistic theory that is the result of the following procedure: start with quantum theory (as a GPT) to then impose the restriction $\mathcal{D}(\mathcal{H}_{B_j})\rightarrow \triangle_{n^2}$ on Bob's states due to the dynamics -- let us call this GPT the `EWQuantum' theory. Under this perspective, we can view $\triangle_{n^2}$ as the normalized state space of `EWQuantum' and $\mathcal{D}^*$ as the (restricted) set of effects of such a theory.  Under this perspective, the fact that the normalized state space is a simplex, $\triangle_{n^2}$, and the effects are a subset of the dual, $\mathcal{D}^*\subset \triangle^*_{n^2}$, imply EWQuantum `fits inside' classical probability theory! Formally, this means that this GPT is \emph{simplex-embeddable}, which is an equivalent condition for noncontextuality of the underlying operational theory in prepare and measure scenarios, as proven in ref.~\cite{schmid2021characterization}. 

In ref. \cite{shahandeh2021contextuality}, noncontextuality in GPTs is characterized in a similar way,  but with the additional condition that
the classical theory in which the GPT `fits' into should
have the same dimension as the GPT. This is exactly what happens here, since the classical probability theory (as a GPT) with set of normalized 
states $\triangle_{n^2}$ and effects 
$\triangle^*_{n^2}$ lives in a real vector space of 
dimension $n^2$, which is the same as ${\rm dim}(\mathbb{H}_{n})$. Therefore, the conditions for noncontextuality according to ref. \cite{shahandeh2021contextuality} are also 
satisfied. 

For cases in which $k_{\rm max}<n$, the situation is not so straightforward (note that there is no need to consider $k_{\rm max}>n$, as it is impossible to have as many affinely independent states). We can still think of the restricted state space $\triangle\subset\mathcal{D}(\mathcal{H}_{B_j})$, but now $\mathbb{H}_n$ is strictly larger than ${\rm span}(\triangle_{k_{\rm max}})$. This brings some technical problems, e.g. $\triangle^*_{k_{\rm max}}$ would not be a compact set on $\mathbb{H}_n$. There are a few possibilities to try to dodge these problems; for instance, one could see the impact of considering non-tomographically complete GPTs, as introduced in ref. \cite{gitton2021solvable}. This, however, is out of the scope of this work, and will be addressed in the future. 

\end{document}